\numberwithin{equation}{section} 
\newtheorem{theorem} {Theorem}[section]
\newtheorem{lemma} {Lemma} [section]
\newtheorem{proof} {Proof}[section]
\newtheorem{proposition} {Proposition}[section]
\numberwithin{figure}{section}
\numberwithin{table}{section}
\begin{document}

\begin{frontmatter}



\title{ Copula-based semiparametric nonnormal transformed linear model for survival data with dependent censoring }




\author[author1,author2]{Huazhen Yu\corref{cor1}}
\cortext[cor1]{Corresponding author}
\ead{11835014@zju.edu.cn}

\author[author1,author3]{Lixin Zhang}
\ead{stazlx@zju.edu.cn}

\address[author1]{School of Mathematical Sciences, Zhejiang University, Hangzhou, Zhejiang 310027, China}
\address[author3]{School of Statistics and Mathematics, Zhejiang Gongshang University, Hangzhou 310018,  Zhejiang, China}


\begin{abstract}


Although the independent censoring assumption is commonly used in survival analysis, 
it  can be violated when the censoring time is related to the survival time, which often happens in many practical applications.
To address this issue,  we propose a flexible semiparametric method for dependent censored data. 
Our approach involves fitting the survival time and the censoring time with a joint transformed linear model, where the transformed function is unspecified. This allows for a very general class of models that can account for possible covariate effects, while  also accommodating administrative censoring.
We assume that the transformed variables have a bivariate nonnormal distribution based on parametric copulas and parametric marginals, which further enhances the flexibility of our method.
We demonstrate the identifiability of the proposed model and establish the consistency and asymptotic normality of the model parameters under appropriate regularity conditions and assumptions. 
Furthermore, we evaluate the performance of our method through extensive simulation studies, and provide a real data example for illustration.

\end{abstract}

\begin{keyword}
Linear transformation model \sep Dependent censoring \sep Copula \sep  Nonparametric transformation\sep Maximum likelihood estimation 


\end{keyword}

\end{frontmatter}



\vspace{-8mm}
\section{Introduction}\label{sec:intro}

Survival analysis is a widely used statistical technique in various fields, such as medicine, engineering, and social sciences. 
In survival analysis, the existence of right censoring happens quite often, for example, at the end of the study period or the withdrawal from the study.
In such situations, investigators only  observe the minimum of survival time $T$ and censor time $C$.
For simplicity, it is common to assume that  $T$ and $C$ are independent of each other and then conduct model estimation. 
However, in many applications, such independence may not be reasonable, such as \citet{LAGAKOS1978} and \citet{Seungyeoun1995}. 
They  considered the data set included 61 inoperable lung cancer patients treated with cyclophosphamide. Some patients dropped out of the study due to worsening condition or adverse effects requiring different treatments. Thus, in this case, the censoring time was related to the survival time and we call it dependent censor.
To distinguish from the dependent censor, the independent censor due to the end of the observation period is called as administrative censor.

Several methods have been proposed in the literature for analysis of the relationship between $T$ and $C$ with dependent censoring data.
For instance,  methods based on  known copula have been explored by \citet{ZHENG1995} and  \citet{RIVEST2001138}.
These approaches have been further applied to different regression models, as studied by  \citet{Braekers2005}, \citet{Huang2008}, \citet{Chen2010} and \citet{SUJICA201889}. 
Another important approach involves  the parametric models, as demonstrated by \citet{BASU1978413}, \citet{Emoto1990}, \citet{Deresa2019}.
Recently, \citet{Deresa2021} introduced a semiparametric transformation regression model that allows for the analysis of covariates. 
However, this method is limited by the assumption that the error vector follows a standard bivariate normal distribution. 
\citet{Deresa2022}  investigated  a more flexible copula approach for bivariate survival data subject to left truncation and dependent censoring, building upon a copula-based method proposed by \citep{Czado2023}.
However, this method is entirely parametric, relying on parametric copulas and  marginal distributions, which requires parametric specification of both the joint distribution and the relationship between $T, C$ and covariates.
Misspecification in such cases often leads to biased estimators.
Therefore,  it is crucial to to develop a method that relaxes the restriction on parametric specification without assumption of bivariate normality.

For this task, we propose a joint transformed linear regression model with  unknown transformations for survival data subject to dependent censoring.
This continued the research direction established by \citet{Deresa2021} and \citet{Deresa2022} and further developed it.
We model the relationship between $T$ and $C$ using a parametric copula with semiparametric  marginal distribution.
At first sight, this model appears to be an ordinary joint regression problem. 
However, the challenge lies in dealing with the identifiability of the model in the presence of unbounded and unspecified transformation function and nonnormal error vector. 
The approach for the nonnormal transformated linear model is motivated by \citet{Tong2019}, who developed a method to estimate the unknown $H$  for uncensored data with a general nonnormal error distribution.
\citet{Yu2024} applied this method to dependent current status data, establishing the identifiability of their proposed model.
In this paper, we extend these ideas to handle data subject to both dependent and administrative censoring. The identification challenge is particularly significant, as we only observe the minimum of $T$, $C$, and $A$, but never all three simultaneously. To address this, we adopt concepts from \citet{Tong2019} and \citet{Deresa2021} to construct a nonparametric estimator for the transformation function. Moreover, under appropriate regularity conditions and assumptions, we establish the consistency and asymptotic normality of the parameter estimators.

The contributions of our approach can be summarized in three main aspects.
Firstly, 
our method focuses on situations where the vector of errors follows a nonnormal bivariate distribution.
The dependence structure and the marginal distributions are described by a parametric class of copulas and a parametric  class of nonnormal distributions, respectively.
A crucial issue in this context is the identifiability of the model, especially the association parameter of copula function.
To address this, we present the sufficient assumptions on parametric copulas and marginal distributions to achieve identification.
Second, 
despite the unboundness of transformation, we present a nonparametric estimator of transformation function without any specification. We demonstrate the consistency of the resulting estimators.
Third, 
our approach employs the maximum likelihood method, enabling simultaneous estimation of  the model parameters and transformation function.
Furthermore, we establish consistency and asymptotic normality of the parameter estimators under some conditions.

The paper is organized as follows. In Section \ref{sec:model}, we provide a general description of the model and the distribution of the observations  on which our approach is built.
We finish Section  \ref{sec:model} with our identifiability result and the sufficient  assumptions.
Section  \ref{sec:estimation}  develops the maximum likelihood estimation approach for our model, 
while in Section \ref{sec:property} we establish the asymptotic properties.
Section \ref{sec:simulation} presents several simulation studies and a formal goodness-of-fit test to assess the finite sample performance of our method, demonstrating its effectiveness in practical situations.
In Section  \ref{sec:real}, we apply the proposed method to a real data example.
Section \ref{sec:conclusion} provides a discussion and concludes the paper, including future research directions.
Finally, the proofs of the theorems are provided in the appendix.

\section{Model   Specification}\label{sec:model}

\subsection{Linear transformation model}

Similar to \cite{Huang2008} and \cite{Deresa2021}, we also let $T$ denote the survival time of interest, $C$ denote the dependent censoring time, and $A$ denote the administrative censoring time.
The variables $T$, $C$, and $A$ all take values within the range $-\infty$ to $\infty$.
We represent $X$ and $W$ as the vectors of covariates associated with $T$ and $C$, respectively.
These covariates can be identical, partially overlapping, or entirely different. 
We assume that, given $X$ and $W$, both $T$ and $C$ are conditionally independent of $A$.
However, $T$ and $C$ may still exhibit dependence even after accounting for the effect of covariates $X$ and $W$. 
Consequently, we consider a linear transformation model for $T$ and $C$ as follows:
\begin{equation}
  \begin{cases}
  H(T)=X^{T}\beta+\varepsilon_{T},\quad  \\
  H(C)=W^{T}\eta+\varepsilon_{C},\quad
  \end{cases} 
  \label{transmodel}
\end{equation}
where   $\beta$ is a $p$-dimensional vector of regression coefficients, and $\eta$ is a $q$-dimensional vector of regression coefficients.
Here, $H \in \mathcal{H} $ is an unknown strictly increasing function. 
Without loss of
generality, we assume $H(0) = 0$.
Since $A$ denotes independent censoring time, the distribution of $A$ is left unspecified here. 
Then we assume the vector of error terms $(\varepsilon _{T} ,\varepsilon _{C} ) $ follows a copula-based bivariate  distribution:
  \begin{equation}
	P(\varepsilon _T<t,\varepsilon _C<c)
	=\mathcal{C} (F_{\varepsilon _T }(t),F_{\varepsilon _C }(c) ), 
 \label{eq22}
   \end{equation}
	where $F_{\varepsilon _T}$ and $F_{\varepsilon _C}$ are the distributions  of $\varepsilon _T$ and $\varepsilon _C$.
 Both the copula function  and the marginal distributions are assumed to be parametric, defined as follows:
 \begin{equation}
	 \mathcal{C}\in \{ \mathcal{C}_r, r\in \Upsilon  \},\quad
	 F_{\varepsilon _T} \in \{ F_{\varepsilon _T,\lambda_T}, \lambda_T \in \Lambda_T \} \quad and \quad
	 F_{\varepsilon _C} \in \{ F_{\varepsilon _C,\lambda_C}, \lambda_C \in \Lambda_C \},
  \label{eq23}
\end{equation}
where $\Upsilon$, $\Lambda_T$ and $\Lambda_C$ are the parameter spaces.	
	Let 
	$$\mathcal{C} '_{1} (u,v)=\frac{\partial }{\partial u} \mathcal{C}(u,v), \quad
	\mathcal{C} '_{2} (u,v)=\frac{\partial }{\partial v} \mathcal{C}(u,v).
	$$
	Define the conditional distribution functions by
	$$
	F_{\varepsilon _T|\varepsilon _C}(t|c)
	=P(\varepsilon _T<t|\varepsilon _C=c)
        =\frac{P(\varepsilon _T<t,\varepsilon _C=c)}{P(\varepsilon _C=c)}
	=\frac{\mathcal{C} '_{2} (F_{\varepsilon _T }(t),F_{\varepsilon _C }(c)) f_{\varepsilon _C}(c) }{f_{\varepsilon _C}(c)}
         =\mathcal{C} '_{2} (F_{\varepsilon _T }(t),F_{\varepsilon _C }(c)),
	$$
	
	$$
	F_{\varepsilon _C|\varepsilon _T}(c|t)
	=P(\varepsilon _C<c|\varepsilon _T=t)
        =\frac{P(\varepsilon _C<c,\varepsilon_T=t )}{P(\varepsilon _T=t)}
	=\frac{\mathcal{C} '_{1} (F_{\varepsilon _T }(t),F_{\varepsilon _C }(c)) f_{\varepsilon _T}(t) }{f_{\varepsilon _T}(t)}
        =\mathcal{C} '_{1} (F_{\varepsilon _T }(t),F_{\varepsilon _C }(c)),
	$$
where $f_{\varepsilon _T}$ and $f_{\varepsilon _C}$ are the density functions  of $\varepsilon _T$ and $\varepsilon _C$, respectively.

\subsection{Distribution of the observation}

Due to right censoring, 
the observed time is defined as  $Z = \min(T, C, A)$. The censoring indicators are defined as $\Delta = I(Z = T)$ and $\xi = I(Z = C)$, where $I(\cdot)$ denotes the indicator function. Thus, the observed data take the form $\{ (Z_i, \Delta_i, \xi_i, X_i, W_i) , i = 1, \ldots, n \}$.
In the case where $Z\equiv T$ (i.e. $\Delta=1, \xi=0$ ),  we have
	$$
	\begin{aligned}  
	F_{Z,\Delta,\xi |X,W}(z,1,0 |x,w)
	&=P(Z\le z,\Delta=1,\xi=0|X=x,W=w) \\  
	&=P(Z=T\le z,T<C,T<A|X=x,W=w) \\
	&=P(T\le z,H(T)<H(C),T<A|X=x,W=w)\\
	&=\int_{-\infty}^{z} 
	P(\varepsilon _C>H(t)-w^T\eta|\varepsilon _T=H(t)-x^T\beta))P(A>t)f_{T}(t)dt , \\  
	\end{aligned}
	$$
	$$
	\begin{aligned} 
	f_{Z,\Delta,\xi |X,W}(z,1,0 |x,w)
	=f_{\varepsilon _T}(H(z)-x^T\beta)
	    S_{\varepsilon _C|\varepsilon _T} (H(z)-w^T\eta | H(z)-x^T\beta)
	    P(A>z) h(z), 
	\end{aligned}
	$$
 where $S_{\varepsilon _C|\varepsilon _T} (c | t)=1-F_{\varepsilon _C|\varepsilon _T} (c | t)$, $h(z)=dH(z)/dz$.
	Similarly, when  $Z\equiv C$ (i.e. $\Delta=0, \xi=1$ ), we have:
	$$
	\begin{aligned}  
	f_{Z,\Delta,\xi |X,W}(z,0,1 |x,w)
	=f_{\varepsilon _C}(H(z)-w^T\eta)
	    S_{\varepsilon _T|\varepsilon _C}  (H(z)-x^T\beta | H(z)-w^T\eta)
	    P(A>z) h(z).
	\end{aligned}
	$$
	Finally, when  $Z\equiv A$ (i.e. $\Delta=0, \xi=0$ ), we have:
	$$
	\begin{aligned}  
	f_{Z,\Delta,\xi |X,W}(z,0,0 |x,w)
	&=P(T>z,C>z|X=x,W=w)f_A(z)\\
	&=\tilde{\mathcal{C}}( F_{\varepsilon _T} (H(z)-x^T\beta) ,  F_{\varepsilon _C}(H(z)-w^T\eta))f_A(z),
	\end{aligned}
	$$
where $\tilde{ \mathcal{C}}(u,v)=1-u-v + \mathcal{C}(u,v)$
and $f_A$ is the density of $A$.
In order to demonstrate the identifiability of model (\ref{transmodel})-(\ref{eq23}) at the end of this section, we  make the following basic assumptions.

\begin{enumerate}[($\mathnormal{A}$1)]\setlength{\itemsep}{-4pt}
    \item\label{A1}
    The error terms ($\varepsilon_T$, $\varepsilon_C$) are independent of the covariates $(X, W)$.
    \item\label{A2}   
    Given $(X, W)$, the times $(T, C)$ are independent of $A$.
    \item\label{A3}   The probabilities $P(Z = T)$, $P(Z = C)$,  $P(Z = A)$  are all strictly positive.
    \item\label{A4} 
    Both \( \text{var}(X) \) and \( \text{var}(W) \) are full-rank matrices.
    \item\label{A5}  
    The distribution of $A$ is independent of all model parameters.
    \item\label{A8}   
     $\mathcal{H}$ consists of all non-decreasing and differentiable functions on $(-\infty,\infty)$, satisfying $0 < \lim_{z\to -\infty} \{h(z)k(z)\} < \infty$ for some fixed function $k: (-\infty,\infty) \to [0,\infty)$ related to the true $H$ function, where $h(t) = dH(t)/dt$. 
\end{enumerate}

The above assumptions concerning the variables $T, C, A$ and covariates $X, W$ are essential and required for model identifiability. 
Notably,  \cite{Deresa2021} was the first to introduce these assumptions, including (A\ref{A8}) on the transformed function $H$, to ensure the identifiability of the transformed linear model.
For Assumption (A\ref{A8}), the function $k(z)$ controls the behavior of the transformation function $H(z)$. 
Specifically, Assumption (A\ref{A8})  is used in the initial step of proving Theorem \ref{theorem:identification}.
The form of $k(z)$ depends on the true transformation function, such as a power function $k(z) = z^c$ or a logarithmic function $k(z) = \log(1+z^c)$, where $c$ is a constant. 
In order to achieve identifiability under the nonnormal error term, we need some additional assumptions on the dependence structure and marginal distribution.

\begin{enumerate}[($\mathnormal{A}$1)]\setlength{\itemsep}{-4pt}
\setcounter{enumi}{6}

    \item \label{A5+} The functions $g_{\varepsilon _T,\lambda_T}(t)=  \partial  \log f_{\varepsilon _T,\lambda_T}(t)/\partial t $, 
    $g_{\varepsilon _C,\lambda_C}(t)=  \partial \log f_{\varepsilon _C,\lambda_C}(t)/\partial t $ are strictly decreasing with $g_{\varepsilon _T,\lambda_T}(0) = 0$, $g_{\varepsilon _C,\lambda_C}(0) = 0$ and the limits exist as $t \to a$ for  $a=-\infty$ and $\infty$.

    \item\label{A6}  
    For all  $x$, $w$ and for all $(\lambda_T,\lambda_C,r)\in \Lambda_T \times \Lambda_C \times \Upsilon$, 
    $a = -\infty$ or $+\infty$, 
    we have
    $$
	\lim_{t\to a} F_{\varepsilon _T|\varepsilon _C,\lambda_{T},\lambda_{C},r}(H(t)-x^T\beta|H(t)-w^T\eta)=0 , 
    $$
$$
	\lim_{t \to a} F_{\varepsilon _C|\varepsilon _T,\lambda_{T},\lambda_{C},r}(H(t)-w^T\eta|H(t)-x^T\beta)=0 .
$$

    \item\label{A7}   
    Consider functions $g_{\varepsilon _T,\lambda_T}(t)$ and $g_{\varepsilon _C,\lambda_C}(t)$ defined in (A\ref{A5+}).
    For all $\lambda_{T,1},\lambda_{T,2} \in \Lambda_T$ and  for $a = - \infty$ and $+\infty$, we have
$$ \lim_{t\to a}  \frac{g_{\varepsilon _T,\lambda_{T,1}}(t)}{g_{\varepsilon _T,\lambda_{T,2}}(t)}=1
\quad
\Longleftrightarrow  \lambda_{T,1}=\lambda_{T,2}
\quad or \quad 
\lim_{t\to a}  \frac{f_{\varepsilon _T,\lambda_{T,1}}(t)}{f_{\varepsilon _T,\lambda_{T,2}}(t)}=1 .$$

 Similarly, for all $\lambda_{C,1},\lambda_{C,2} \in \Lambda_C$ and  for $a = -\infty$ and $+\infty$, we have
$$\lim_{t\to a}  \frac{g_{\varepsilon _C,\lambda_{C,1}}(t)}{g_{\varepsilon _C,\lambda_{C,2}}(t)}=1
\quad 
\Longleftrightarrow  \lambda_{C,1}=\lambda_{C,2}
\quad or \quad 
\lim_{t\to a}  \frac{f_{\varepsilon _C,\lambda_{C,1}}(t)}{f_{\varepsilon _C,\lambda_{C,2}}(t)}=1.$$

\item\label{A9}
Consider functions $H_1$ and $H_2 \in \mathcal{H}$. Let $0 < C_a < \infty$ be a constant that depends on $a$, where $a = -\infty$. For all $\lambda_{T,1}, \lambda_{T,2} \in \Lambda_T$, if
$$
\lim_{t \to a} \frac{H_1(t)}{H_2(t)}=\frac{1}{C_a}, \quad \text{and} \quad
\lim_{t \to a} \frac{f_{\varepsilon_T, \lambda_{T,1}}(H_1(t) - x^T \beta_1)}{f_{\varepsilon_T, \lambda_{T,2}}(H_2(t) - x^T \beta_2)} = C_a, \, \text{for all } x ,
$$
then both limits hold if and only if
 $ C_a=1$.

Similarly, for all $\lambda_{C,1}, \lambda_{C,2} \in \Lambda_C$,  if
$$
\lim_{t \to a} \frac{H_1(t)}{H_2(t)}=\frac{1}{C_a}, \quad \text{and} \quad
\lim_{t \to a} \frac{f_{\varepsilon_C, \lambda_{C,1}}(H_1(t) - w^T \eta_1)}{f_{\varepsilon_C, \lambda_{C,2}}(H_2(t) - w^T \eta_2)} = C_a, \, \text{for all } w
$$
then both limits hold if and only if
 $ C_a=1$.
When $a=\infty$, the above assumption still hold.

\end{enumerate}

Assumptions (A\ref{A5+}) and (A\ref{A7}) pertain to the marginal densities of $T$ and $C$. They are similar to but stricter than Assumption (A1) in \citet{Deresa2022}.
Since our model involves the nonparametric estimator of $H$, these higher-level assumptions are required.
It is worth noting that a popular class of density functions, given by
\begin{equation}
    f_{\lambda}(x)=\frac{e^x}{(1+\lambda e^x)^{1+1/\lambda}},
\label{eq_tong}
\end{equation}
satisfies Assumptions (A\ref{A5+}) and (A\ref{A7}), as proven in \cite{Tong2019}. 
This class of density gives the Cox model and the proportional odds model with $\lambda = 0$ and $1$, respectively.
Furthermore, Theorem 3.2 of \citet{Deresa2022} demonstrates that Assumption (A\ref{A6}) is satisfied for many parametric families of copulas and densities under certain conditions.
According to  \citep{Czado2023},
Assumption (A\ref{A6}) is a sufficient condition for identification but not a necessary one.
Assumption (A\ref{A9}) pertains to both
the margins  family and transformed functions. 
We can verify  that (A\ref{A9}) also holds for the parametric marginal functions (\ref{eq_tong}) through  the following Proposition \ref{prop:A10}.
Indeed, several commonly used density functions satisfy this assumption. For example, the Weibull distribution preserves this equivalence.
However, the normal density function $f_{\sigma}(x)=exp(-x^2/2\sigma^2)/\sqrt{2\pi \sigma^2} $ of the distribution \( N(0,\sigma^2) \) does not satisfy this assumption. It is important to exercise caution when validating (A\ref{A9}), as the introduction of a parameter can influence the result, even within the same family of distributions. The Cauchy distribution serves as an example. While the standard Cauchy distribution, with density function $f(t) = 1/(\pi(1+t^2))$, clearly satisfies (A\ref{A9}), the Cauchy distribution with a scale parameter $\lambda$, i.e., $f(t) = \lambda/\pi(\lambda^2 + t^2)$, can only lead to $C_a = \lambda_2/\lambda_1$.

\begin{proposition}
    Assumption (A\ref{A9}) holds for $\varepsilon $ that follows the density  $f_{\lambda}(x)$, which is given in
    (\ref{eq_tong}).
\label{prop:A10}
\end{proposition}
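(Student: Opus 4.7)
The plan is to compute the asymptotic behaviour of $\log f_\lambda(y)=y-(1+1/\lambda)\log(1+\lambda e^{y})$ at both tails and feed it into the two limit conditions of (A\ref{A9}). As $y\to-\infty$, $\log(1+\lambda e^{y})=O(e^{y})=o(1)$, so $\log f_\lambda(y)=y+o(1)$, with leading term independent of $\lambda$. As $y\to+\infty$, the expansion $\log(1+\lambda e^{y})=y+\log\lambda+o(1)$ yields $\log f_\lambda(y)=-y/\lambda-(1+1/\lambda)\log\lambda+o(1)$. Each expansion converts the density-ratio condition into a purely geometric condition on the relative behaviour of $H_1,H_2$ at the tail.

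I would treat $a=-\infty$ first. Substituting $y_i(t)=H_i(t)-x^{T}\beta_i$ into the $-\infty$ expansion gives $\log\bigl(f_{\varepsilon_T,\lambda_{T,1}}(y_1)/f_{\varepsilon_T,\lambda_{T,2}}(y_2)\bigr)=(H_1(t)-H_2(t))-x^{T}(\beta_1-\beta_2)+o(1)$, which by hypothesis must tend to $\log C_a$. Since $\varepsilon_T$ has full support on $\mathbb{R}$, the $H_i\in\mathcal{H}$ must be surjective onto $\mathbb{R}$, so $H_i(t)\to-\infty$ as $t\to-\infty$. Writing $H_1-H_2=H_2\bigl(H_1/H_2-1\bigr)=H_2(1/C_a-1)+o(H_2)$ and combining with $H_1/H_2\to 1/C_a$, a sign case-split closes the argument: if $C_a>1$ then $H_2(1/C_a-1)\to+\infty$, making the log-ratio diverge and contradicting $C_a<\infty$; if $C_a<1$ the log-ratio tends to $-\infty$, contradicting $C_a>0$. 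Hence $C_a=1$.

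For $a=+\infty$, one runs the same programme with the $+\infty$ expansion: the leading contribution to the log-ratio becomes $-H_1/\lambda_{T,1}+H_2/\lambda_{T,2}+x^{T}(\beta_1/\lambda_{T,1}-\beta_2/\lambda_{T,2})+O(1)$. The quantifier \emph{for all $x$}, together with the full rank of $\mathrm{var}(X)$ from (A\ref{A4}), forces the $x$-coefficient to vanish, since otherwise the limit would depend on $x$ and cannot equal the constant $\log C_a$. Finiteness of the remaining $H$-dependent part, combined with $H_1/H_2\to 1/C_a$, then drives the same sign case-split and yields $C_a=1$. The main obstacle is precisely this $+\infty$ branch: unlike at $-\infty$ where the leading tail behaviour of $f_\lambda$ is $\lambda$-universal, here the expansion genuinely depends on the parameters $\lambda_{T,1},\lambda_{T,2}$, so one must carefully invoke the quantifier over $x$ and (A\ref{A4}) to disentangle the $\lambda$-effects from the $H$-effects before the sign dichotomy can be applied. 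The argument for the $C$-component is entirely symmetric, interchanging the roles of $(\beta,X,\lambda_T)$ with $(\eta,W,\lambda_C)$.
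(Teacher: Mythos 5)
Your $a=-\infty$ branch is correct and is essentially the paper's own argument written in logarithmic form: the tail of $\log f_\lambda$ is $\lambda$-free there, the log-ratio reduces to $H_1(t)-H_2(t)+x^{T}(\beta_2-\beta_1)+o(1)$, and the sign dichotomy on $(1/C_a-1)H_2(t)$ with $H_2(t)\to-\infty$ forces $C_a=1$.

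The $a=+\infty$ branch, however, has a genuine gap at its final step. After you remove the $x$-term via (A\ref{A4}), the $H$-dependent part of the log-ratio is $H_2(t)/\lambda_{T,2}-H_1(t)/\lambda_{T,1}$, and dividing by $H_2(t)$ shows its growth rate is governed by $1/\lambda_{T,2}-1/(C_a\lambda_{T,1})$. Requiring this part to stay bounded therefore yields only the constraint $\lambda_{T,2}=C_a\lambda_{T,1}$, \emph{not} $C_a=1$: unlike at $-\infty$, the coefficient that must vanish is not $1/C_a-1$ but a $\lambda$-dependent combination, which can be killed with $C_a\neq 1$ by an appropriate choice of $\lambda_{T,2}$. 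So the ``same sign case-split'' does not close the argument. The paper has to do additional bookkeeping at this point: it argues that the finite limit $D$ of $H_2/\lambda_{T,2}-H_1/\lambda_{T,1}$ is in fact $0$, uses (A\ref{A4}) to get $\beta_1/\lambda_{T,1}=\beta_2/\lambda_{T,2}$, and then matches the remaining constant factor $\lambda_{T,2}^{1+1/\lambda_{T,2}}/\lambda_{T,1}^{1+1/\lambda_{T,1}}$ against $C_a$ to obtain $\lambda_{T,1}^{1/\lambda_{T,1}}=\lambda_{T,2}^{1/\lambda_{T,2}}$, from which it concludes $\lambda_{T,1}=\lambda_{T,2}$ and hence $C_a=1$. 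Your write-up correctly observes that the $\lambda$'s enter the $+\infty$ expansion and that the $x$-quantifier must be invoked, but it never carries out this analysis of the constant terms; without it, the conclusion $C_a=1$ does not follow from the steps you give.
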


\begin{proof}
    By letting $\lim_{t\to a} \{H_1(t)/H_2(t)\}=1/C_a $, we can derive that
    $$
    \lim_{t\to a}  \frac{f_{\lambda_{1}}(H_1(t)-x^T \beta_1)}{f_{\lambda_{2}}(H_2(t)-x^T \beta_2)}
    =\lim_{t\to a}
    \frac{e^{H_1(t)-x^T \beta_1}}{e^{H_2(t)-x^T \beta_2}}
    \frac{(1+\lambda_2 e^{H_2(t)-x^T \beta_2})^{1+1/\lambda_2}}
    {(1+\lambda_1 e^{H_1(t)-x^T \beta_1})^{1+1/\lambda_1}}.
    $$
    We first consider $a=-\infty$ 
    and  have
    \begin{equation}
        \lim_{t\to a} 
    \frac{f_{\lambda_{1}}(H_1(t)-x^T \beta_1)}{f_{\lambda_{2}}(H_2(t)-x^T \beta_2)}
    =\lim_{t\to a} exp((H_1(t)-H_2(t))+x^T(\beta_2-\beta_1)), 
    \label{A9proof1}
    \end{equation}
    Since $\lim_{t\to a} \{H_1(t)/H_2(t)\}=1/C_a$, we can deduce that:
    $$\lim_{t\to a} \frac{H_1(t)-H_2(t)}{H_2(t)}=\frac{1}{C_a}-1, $$
    which implies that when $ C_a \ne 1$, 
    $\lim_{t\to a} \{H_1(t)-H_2(t)\}$ is $-\infty$ or $\infty$.
    Since $0<C_a<\infty$, (\ref{A9proof1}) equals constant $C_a$ only when $\lim_{t\to -\infty} \{H_1(t)-H_2(t)\} \ne -\infty$ or $\infty$ and therefore $C_a=1$.
    
    Similarly, we consider $a=+\infty$ and we  have
    \begin{equation}
    \begin{aligned}
        \lim_{t\to a} 
    \frac{f_{\lambda_{1}}(H_1(t)-x^T \beta_1)}{f_{\lambda_{2}}(H_2(t)-x^T \beta_2)}
    =\lim_{t\to a}
    \frac{\lambda_2^{1+1/\lambda_2}}{\lambda_1^{1+1/\lambda_1}}
    exp(\frac{H_2(t)}{\lambda_2} - \frac{H_1(t)}{\lambda_1}) exp(x^T(\frac{\beta_1}{\lambda_1} - \frac{\beta_2}{\lambda_2}   ) ),     
     \end{aligned}
     \label{A9proof2}
    \end{equation}
    and
    $$\lim_{t\to a} \frac{H_2(t)/\lambda_2-H_1(t)/\lambda_1}{H_2(t)}=\frac{1}{ \lambda_2}-\frac{1}{C_a \lambda_1}. $$
    Thus (\ref{A9proof2}) equals $C_a$ only when there exist a constant $-\infty<D<\infty$ such that
    $\lim_{t \to a} \{ H_2(t)/\lambda_2 - H_1(t) /\lambda_1\}=D$, which means
    $\lambda_2=C_a \lambda_1$. 
    Then 
    if $D<0$, we have $\lim_{t \to a}H_2(t)< \lim_{t \to a} C_a H_1(t) $, 
    and 
    $$
    \frac{1}{C_a }=
    \lim_{t \to a} \frac{H_1(t)}{H_2(t)}>
    \lim_{t \to a}
    \frac{H_1(t)}{C_a H_1(t)}=
    \frac{1}{C_a },
    $$
    which is impossible. 
    Similarly, if $D>0$,  we can also obtain  a contradiction.
    Thus $D=0$.
    Then from (\ref{A9proof2}) we have  
    $$
    \lim_{t\to a}
    exp(x^T(\frac{\beta_1}{\lambda_1} - \frac{\beta_2}{\lambda_2}   ) )
    = \frac{\lambda_1^{1/\lambda_1}}{\lambda_2^{1/\lambda_2}}, \,
    \text{ for  almost every x,}
    $$
    which means that $Var(x^T(\frac{\beta_1}{\lambda_1} - \frac{\beta_2}{\lambda_2}))=0$.
     It now follows from Assumption (A\ref{A4}) that $\beta_1 / \lambda_1 - \beta_2/\lambda_2=0$ and hence we have 
     $ \lambda_1^{1/\lambda_1}=\lambda_2^{1/\lambda_2}$.
     Therefore, we have $\lambda_1=\lambda_2$ and $C_a=1$.
    
\end{proof}

Considering the parameters $(\theta, H) = (\beta, \eta, \lambda_T, \lambda_C, r, H)$, the model identifiability can be summarized as follows.

\begin{theorem}
\label{theorem:identification} (Identifiability of the Model):
Suppose the Assumptions (A\ref{A1})-(A\ref{A9}) given above hold true.
Consider the model (\ref{transmodel})-(\ref{eq23}) defined by the parameters $ (\beta,\eta,\lambda_T,\lambda_C,r, H)$, where $\beta \in \mathcal{R}^p, \eta \in \mathcal{R}^q, \lambda_T \in \Lambda_T , \lambda_C \in \Lambda_C , r \in \Upsilon , H\in  \mathcal{H}$.
Let $(\theta_i, H_i)$ be two sets of parameters, where $\theta_j=(\beta_j, \eta_j, \lambda_{T,j}, \lambda_{C,j}, r_j)$ for $j=1,2$. 
Let $f_{Z,\Delta,\xi |X,W}(Z_j, \Delta_j, \xi_j | X, W; \theta_j, H_j)$ denote the probability density function of the observed data given $(X, W)$ under the parameters $(\theta_i, H_i)$.
If
$$
f_{Z,\Delta,\xi |X,W}(Z_1, \Delta_1, \xi_1 | X, W; \theta_1, H_1) = f_{Z,\Delta,\xi |X,W}(Z_2, \Delta_2, \xi_2 | X, W; \theta_2, H_2) 
$$
for all $(X, W)$,
we have $(\theta_1, H_1) = (\theta_2, H_2)$,
which implies that the model is identifiable.
\end{theorem}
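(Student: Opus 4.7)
\textbf{Proof plan for Theorem \ref{theorem:identification}.} My overall strategy is to match the three observable sub-densities, cancel the parameter-free administrative pieces using (A\ref{A5}), and then peel off the unknowns layer by layer: first the transformation $H$ and the $T$-marginal pair $(\beta,\lambda_T)$, then the $C$-marginal pair $(\eta,\lambda_C)$, and finally the copula parameter $r$. Throughout, the identity on densities is
\begin{equation*}
f_{Z,\Delta,\xi|X,W}(z,\delta,\zeta|x,w;\theta_1,H_1)=f_{Z,\Delta,\xi|X,W}(z,\delta,\zeta|x,w;\theta_2,H_2)
\end{equation*}
separately for $(\delta,\zeta)\in\{(1,0),(0,1),(0,0)\}$. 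Using (A\ref{A5}) I can cancel $P(A>z)$ from the first two cases and $f_A(z)$ from the third.

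First, on the $(\Delta{=}1,\xi{=}0)$ sub-density I form the ratio at two distinct covariate values $(x_1,w)$ and $(x_2,w)$ to eliminate the common factor $h(z)$. Letting $z\to-\infty$ and invoking (A\ref{A6}) makes the conditional survival factor $S_{\varepsilon_C|\varepsilon_T}$ tend to $1$ on both sides. What remains is a tail identity purely in terms of the marginal densities $f_{\varepsilon_T,\lambda_{T,j}}(H_j(z)-x_i^T\beta_j)$. Taking the logarithmic derivative in $z$ turns this into a statement about the score $g_{\varepsilon_T,\lambda_T}$ of (A\ref{A5+}), which is strictly decreasing and has limits at $\pm\infty$. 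Combining this with (A\ref{A8}), so that $h(z)k(z)$ has a positive finite limit and hence $H_1/H_2$ has a limit $1/C_{-\infty}\in(0,\infty)$, puts us in the precise hypothesis of (A\ref{A9}); applying (A\ref{A9}) then yields $C_{-\infty}=1$, i.e.\ $H_1$ and $H_2$ agree in the lower tail. A symmetric argument at $+\infty$ extends agreement, while (A\ref{A7}) is invoked to force $\lambda_{T,1}=\lambda_{T,2}$. The full-rank condition (A\ref{A4}) on $X$, applied to the identity after $H_1\equiv H_2$ and $\lambda_{T,1}=\lambda_{T,2}$ are pinned down, gives $\beta_1=\beta_2$; an entirely parallel argument on the $(\Delta{=}0,\xi{=}1)$ sub-density delivers $\eta_1=\eta_2$ and $\lambda_{C,1}=\lambda_{C,2}$.

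With all marginal ingredients established, I substitute into the $(\Delta{=}0,\xi{=}0)$ identity: after dividing out $f_A(z)$ both sides reduce to
\begin{equation*}
\tilde{\mathcal C}_{r_1}\!\bigl(F_{\varepsilon_T,\lambda_T}(H(z)-x^T\beta),F_{\varepsilon_C,\lambda_C}(H(z)-w^T\eta)\bigr)
=\tilde{\mathcal C}_{r_2}\!\bigl(\text{same arguments}\bigr).
\end{equation*}
By (A\ref{A3}) the arguments $(F_{\varepsilon_T},F_{\varepsilon_C})$ range over a product set with nonempty interior in $(0,1)^2$ as $(z,x,w)$ varies, so the copulas $\mathcal C_{r_1}$ and $\mathcal C_{r_2}$ coincide on an open set and hence on all of $[0,1]^2$; the parametric identifiability of $\{\mathcal C_r:r\in\Upsilon\}$ then forces $r_1=r_2$, completing the proof.

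\textbf{Main obstacle.} The delicate step is the joint disentanglement of $H$ and $(\beta,\lambda_T)$ in the tail, because $H$ is nonparametric and unbounded, and the conditional survival factor $S_{\varepsilon_C|\varepsilon_T}$ couples the two marginals. The reason (A\ref{A6})--(A\ref{A9}) are designed as they are is precisely to decouple these: (A\ref{A6}) kills the copula contribution at the boundary, (A\ref{A8}) controls the rate at which $H$ blows up so that ratios $H_1/H_2$ admit finite positive limits, and (A\ref{A9}) converts those limits into the rigid condition $C_a=1$. Getting this tail analysis clean, and then extending from the limiting identity back to a global identity of the $H$'s, is the real work of the proof.
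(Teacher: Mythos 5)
Your overall architecture matches the paper's: cancel the administrative pieces via (A\ref{A5}), do a tail analysis of the $(\Delta{=}1,\xi{=}0)$ sub-density using (A\ref{A6}) to kill the conditional survival factor, use (A\ref{A8}) to give $H_1/H_2$ a finite positive limit, invoke (A\ref{A9}) to force $C_a=1$, and finish the copula parameter on the $(\Delta{=}0,\xi{=}0)$ sub-density. However, there are two genuine gaps.

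First, you claim that $C_{-\infty}=1$ together with the analogous statement at $+\infty$ yields $H_1\equiv H_2$, and you then use this global identity to extract $\beta_1=\beta_2$. But $C_a=1$ only gives $\lim_{z\to a}H_1(z)/H_2(z)=1$ (and, after the paper's Lemma D.1, $\lim_{z\to a}\{H_1(z)-H_2(z)\}=0$); agreement in both tails says nothing about $H_1$ and $H_2$ on compact sets, so $H_1\equiv H_2$ does not follow at this stage. The paper instead gets $\beta_1=\beta_2$ directly from the limiting identity via a mean value theorem argument: the combination $H_1(z)-H_2(z)+x^T(\beta_2-\beta_1)\to 0$ for almost every $x$, and (A\ref{A4}) kills the covariate term. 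The global identification of $H$ is deferred to the very last step, where the $(\Delta{=}0,\xi{=}0)$ identity is used \emph{for all} $z$ after every finite-dimensional parameter has been pinned down. Your plan uses that sub-density only for $r$, so the global $H$ identification is simply missing.

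Second, (A\ref{A7}) cannot "force $\lambda_{T,1}=\lambda_{T,2}$": its conclusion is a disjunction ($\lambda_{T,1}=\lambda_{T,2}$ \emph{or} $\lim f_{\varepsilon_T,\lambda_{T,1}}/f_{\varepsilon_T,\lambda_{T,2}}=1$), and in the paper it is used only to show that a log-density difference vanishes in the tail (which holds under either branch), enabling the $\beta$ identification. The actual identification of $\lambda_T$ and $\lambda_C$ requires a separate argument — the paper's Step 2 — which differentiates the joint survival identity with respect to $x$, observes that $w$ appears on only one side of the resulting ratio so that the density ratio must be constant, and then evaluates at $z=0$ using the normalizations $H(0)=0$ and $g_{\varepsilon_T,\lambda_T}(0)=0$ from (A\ref{A5+}). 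Your proposal has no substitute for this step, so $\lambda_T$ and $\lambda_C$ are not identified by the argument as written.
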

To prove the identifiability of the model, we follow the similar steps of the proof used in Theorem 1 of \cite{Deresa2021}. The detailed proof is provided in \ref{sec:proof31}.


\section{Model estimation}
\label{sec:estimation}

\subsection{Estimation of parameters}

Let $\theta=(\beta,\eta,\lambda_T,\lambda_C
	,r) \in \Theta$, where 
 $\Theta = \mathcal{R}^{p+q}\times \Lambda_T \times \Lambda_C \times \Upsilon $. When $Z_i\equiv T_i$ (i.e. $\Delta_i=1, \xi_i=0$ ),  we have the corresponding subdensity  is
	$$
	\begin{aligned}  
	L_{i1}(\theta,H)
	\propto 
	f_{\varepsilon _T}(H(Z_i)-X_i^T\beta)
	S_{\varepsilon _C|\varepsilon _T} (H(Z_i)-W_i^T\eta | H(Z_i)-X_i^T\beta)
	\mathrm{d} H(Z_i),
	\end{aligned}
	$$
 where $\propto $ represents proportional.
	Similarly, when  $Z_i \equiv C_i$ (i.e. $\Delta_i=0, \xi_i=1$ ),
	$$
	\begin{aligned}  
	L_{i2}(\theta,H)
	\propto
	f_{\varepsilon _C}(H(Z_i)-W_i^T\eta)
	 S_{\varepsilon _T|\varepsilon _C}  (H(Z_i)-X_i^T\beta | H(Z_i)-W_i^T\eta)
	 \mathrm{d} H(Z_i).
    \end{aligned}
	$$	
	Finally, when  $Z_i\equiv A_i$ (i.e. $\Delta_i=0, \xi_i=0$ ),
	$$
	\begin{aligned}  
	L_{i3}(\theta,H) 
	\propto
	\tilde{\mathcal{C}}(F_{\varepsilon _T} (H(Z_i)-X_i^T\beta ), F_{\varepsilon _C}( H(Z_i)-W_i^T\eta)).
	\end{aligned}
	$$
Therefore, the likelihood corresponding to the $i$th observation can be expressed as
	$$
	L_{i}(\theta,H)
	\propto
	 L_{i1}(\theta,H)^{\Delta_i(1-\xi_i)}
	 L_{i2}(\theta,H)^{(1-\Delta_i)\xi_i}
     L_{i3} (\theta,H)^{(1-\Delta_i)(1-\xi_i)}.
	$$

The estimator $\hat{s}_n = (\hat{\theta}_n, \hat{H}_n)$ is obtained by maximizing the likelihood function ${\textstyle \prod_{i=1}^{n}} L_i(\theta, H)$ over the parameter space $\Theta \times \mathcal{H}$. Direct maximization is challenging due to the presence of the unknown function $H$. 
Therefore,  a two-stage approach is used here.
We first estimate the the jump size of $H$ and then obtain the estimated transformation function. Then, we estimate the parameter $\theta$ by maximizing the pseudo likelihood, where $H$ is replaced by its estimated value.

\subsection{Estimation of the transformation function}
\label{subsec:estH}

The estimation procedure for $H$ follows the approach outlined by \cite{Deresa2021} and \citet{Tong2019}.
To apply the method of \citet{Tong2019}, which only considers data without censoring, we define $V = \min(T, C)$ and the corresponding censoring indicator $\zeta =\Delta + \xi $. Since $T$ and $C$ are conditionally independent of $A$ given $X$ and $W$, $V$ and $A$ are still independent. 
Thus, as \cite{Deresa2021} claimed, the dependent censoring model can be rewritten as an independent model.
We note that the distribution function of $V$ given $X$ and $W$ has the form
$$
 \begin{aligned}
 F_V(v;\theta,H)
 &= P(V<z|X,W) \\
&=P(T<z|X,W)+P(C<z|X,W)-P(T<z,C<z|X,W) \\
&=F_{\varepsilon _T}(H(z)-X^T \beta ) + F_{\varepsilon _C}(H(z)-W^T \eta )-
\mathcal{C} (F_{\varepsilon _T}(H(z)-X^T \beta ),F_{\varepsilon _C}(H(z)-W^T \eta )).
 \end{aligned}
$$
Thus the  density function of $V$ is 
$$
 \begin{aligned}
f_V(z|X,W)
&=f_{\varepsilon _T}(H(z)-X^T \beta )h(z) + f_{\varepsilon _C}(H(z)-W^T \eta )h(z) -
 \, \mathcal{C}'_1(F_{\varepsilon _T}(H(z)-X^T \beta ),F_{\varepsilon _C}(H(z)-W^T \eta ))
 f_{\varepsilon _T}(H(z)-X^T \beta )h(z)
 \\
& \quad\quad 
-\mathcal{C}'_2(F_{\varepsilon _T}(H(z)-X^T \beta ),F_{\varepsilon _C}(H(z)-W^T \eta ))f_{\varepsilon _C}(H(z)-W^T \eta )h(z),\\
 \end{aligned}
$$
where $h(z)=dH(z)/dz$.
Since $Z=min(V,A)$, where $V$ and $A$ are independent,
we have the following log-likelihood function:
\begin{equation}
    \begin{aligned}
    &\sum_{i=1}^{n} \zeta_i \left[ \log \omega \left( H(Z_i) - X_i^T \beta ,\, H(Z_i) - W_i^T \eta \right) 
    + \log H\{Z_i\} + \log n \right] \\
    & \qquad + \sum_{i=1}^{n} (1 - \zeta_i) \log S \left( H(Z_i) - X_i^T \beta ,\, H(Z_i) - W_i^T \eta \right) ,
\end{aligned}
\label{loglikefunc}
\end{equation}
where $\omega (x_1,x_2)=f_{\varepsilon_T }(x_1)(1-\mathcal{C}'_1(F_{\varepsilon_T }(x_1),F_{\varepsilon_C }(x_2)))+f_{\varepsilon_C }(x_2)(1-\mathcal{C}'_2(F_{\varepsilon_T }(x_1),F_{\varepsilon_C }(x_2)))$,  
$ S(x_1,x_2)= \tilde{C} \left( F_{\varepsilon_T }(x_1),  F_{\varepsilon_C }(x_2) \right)$.
Here, $H\{Z_i\} = H(Z_i) - H(Z_i -)$ represents the jump size of  $H$ at the point $Y_i$.
By maximizing the log-likelihood function (\ref{loglikefunc}) w.r.t. $H\{Z_i\}$, we can obtain the estimator $\hat{H}(\cdot )$.
Then  $\hat{\theta}$ can be derived by the solution of the score function
$$
\sum_{i=1}^{n} U(Z_i,\Delta_i,\xi_i,\theta,\hat{H}(\cdot ))=0,
$$
where $ U(Z_i,\Delta_i,\xi_i,\theta,H)=\partial \log L_i(\theta,H) /\partial \theta $.

\subsection{Estimation Algorithm  }

    Let $Z_1<Z_2<...<Z_n $. For the determination of $(\hat{\theta}_n,\hat{H}_n)$, we propose to employ the following algorithm.
 
    Step 1. Choose the initial values of $\theta^{(0)}$ and $H\{Z_k\}^{(0)}$, $k=1,...,n$.
    
    Step 2. Iteration of $t \to (t+1)$ . 
    
    (a) For $Z_k>0$, obain the updated estimator $H\{ Z_k \} ^{(t+1)}$ by
    \begin{equation}
        \begin{aligned}
    H\{Z_k\}^{(t+1)}
    =  \frac{\sum_{i=1}^{n} \zeta_i  I(Z_i=Z_k) }
    {\sum_{i=1}^{n}I(Z_i \ge Z_k)
    \psi^{(t)}(Z_i)    },
    \end{aligned}
    \label{eq:stepform}
    \end{equation}
    where 
    \begin{equation}
        \begin{aligned}
    \psi^{(t)}(Z_i) 
    &= -
    [\zeta_i \frac{\partial }{\partial H(z)} \log 
    \omega\left(H(z) - X_i^T \beta^{(t)}, \, H(z) - W_i^T \eta^{(t)} \right) \\
    & \qquad \left .  +(1-\zeta_i ) \frac{\partial }{\partial H(z)} \log S
    \left( H(z)-X_i^T\beta^{(t)} ,\, H(z)-W_i^T\eta^{(t)} \right) ]
    \right| _{H(z) = H^{(t)}(Z_i)},
    \end{aligned}
    \label{eq:alg_psi}
    \end{equation}
    for 
    $H(Z_i)=\sum_{k=1}^{n}H\{Z_k\}I(Z_k \le Z_i)I(Z_k \ge 0)$.

    (b) For $Z_k < 0$, obtain the updated estimator $H\{ Z_k \} ^{(t+1)}$ by
    $$
    H\{Z_k\}^{(t+1)}
    =\frac{\sum_{i=1}^{n} \zeta_i I(Z_i=Z_k) }
    {\sum_{i=1}^{n}I(Z_i \le Z_k)
    \psi^{(t)}(Z_i)   },
    $$
    where 
    $H(Z_i)=\sum_{k=1}^{n}H\{Z_k\}I(Z_k \ge Z_i)I(Z_k \le 0)$.

    Step 3. Obtain the estimator $\theta^{(t+1)}$ by solving the following score equation 
    $$
    U(\theta)=
    \sum_{i=1}^{n} U(Z_i,\Delta_i,\xi_i,\theta,H^{(t+1)}(Z_i))=0.
    $$    
    
    Step 4. Repeat Step 2 and Step 3 until convergence criteria are satisfied.

\subsection{Goodness of fit test}
\label{sec:goodness}

To assess the quality of the model fit, we employ a formal goodness-of-fit test, following the approach of \citet{Deresa2021}. This test measures the $L_2$ distance between the distribution derived from a nonparametric Kaplan-Meier estimator and that of the proposed semiparametric model.
We continue to use $V$ and $\zeta$ as defined in section \ref{subsec:estH} and consider the null hypothesis
$$
H_0 : P(V \le v)=F_{V}(v;\theta,H), 
\text{for some $\theta$}.
$$
According to the proposed model, the distribution of $V$ satisfies the following equation
$$
\begin{aligned}
F_V(v;\theta,H)
&=\int \int P(V \le v|X=x,W=w)f_{X,W}(x,w)dxdw\\
&=\int F_{\varepsilon_T} (H(v)-x^T\beta)f_X(x)dx + \int F_{\varepsilon_C} (H(v)-w^T\eta)f_W(w)dw \\
& \quad\quad -\int \int \mathcal{C} (F_{\varepsilon_T}(H(v)-x^T\beta),F_{\varepsilon_C}(H(v)-w^T\eta))f_{X,W}(x,w)dxdw.
\end{aligned}
$$
By substituting the estimated model parameters, we obtain the approximate distribution function as
\begin{equation} 
\begin{aligned}
\hat{F}_V(v;\hat{\theta}_n,\hat{H}_n)
&=\frac{1}{n}\sum_{i=1}^{n}  F_{\varepsilon_T,\hat{\lambda}_{T,n}} (\hat{H}_n(v)-X_i^T\hat{\beta}_n) + \frac{1}{n}\sum_{i=1}^{n}  F_{\varepsilon_C,\hat{\lambda}_{C,n}} (\hat{H}_n(v)-W_i^T\hat{\eta}_n) \\
& \quad \quad -\frac{1}{n}\sum_{i=1}^{n}  \mathcal{C}_{\hat{r}_n}(F_{\varepsilon_T,\hat{\lambda}_{T,n}}(\hat{H}_n(v)-X_i^T\hat{\beta}_n),F_{\varepsilon_C,\hat{\lambda}_{C,n}}(\hat{H}_n(v)-W_i^T\hat{\eta}_n)).
\end{aligned}
\end{equation}
Consider the nonparametric estimator of the distribution of $V$, which is based on the  estimator proposed by \citet{Kaplan1958} for right-censored survival data:
$$
\hat{F}_{KM}(z) = 1 - \prod_{Z_i \le z} \left\{ 1 - \frac{\sum_{j=1}^{n} I(Z_j = Z_i, \zeta_i = 1)}{\sum_{j=1}^{n} I(Z_j \ge Z_i)} \right\}.
$$
To test whether $H_0$ holds, we will assess the goodness-of-fit using the Cramér-von Mises statistic, which is computed as follows:
$$
T_{CM}=\sum_{i=1}^{n} \{\hat{F}_{KM}(Z_i) -  \hat{F}_V( Z_i; \hat{\theta}_n,\hat{H}_n) \}^2.
$$
A larger $T_{CM}$ value suggests that the proposed model may be inaccurate.
As mentioned by \cite{Deresa2021}, a bootstrap procedure is recommended to obtain the distribution of $T_{CM}$ under the null hypothesis. We will briefly outline the algorithm for the goodness-of-fit test for the proposed model as follows. Let $B$ denote the number of bootstrap samples.

1. 
We first generate the error terms $(\varepsilon_{T,i}^b, \varepsilon_{C,i}^b)$ from the fitted copula model using the inverse functions $\varepsilon _{T,i}^b=F_{\varepsilon_T,\hat{\lambda}_{T,n} }^{-1}(U_{T,i}^b )$ and $\varepsilon _{C,i}^b=F_{\varepsilon_C,\hat{\lambda}_{C,n} }^{-1}(U_{C,i}^b )$, where $(U_{T,i}^b, U_{C,i}^b)$ for $i=1,...,n$ are generated from the copula $\mathcal{C}_{\hat{r}_n}$.

2. We obtain  $(T_i^b,C_i^b), i=1,...,n $,  by 
$$
\left\{\begin{matrix} 
  T_i^b=\hat{H}_n^{-1}({X_i}^T \hat{\beta}_n + \varepsilon _{T,i}^b  ), \\  
  C_i^b=\hat{H}_n^{-1}({W_i}^T \hat{\eta}_n + \varepsilon _{C,i}^b  ).
\end{matrix}\right. 
$$

3. 
Obtain $A_i^b$ using the Kaplan-Meier estimator of the distribution of $A$, with $A$ treated as the survival time and $1-\zeta$ as the censoring indicator.

4.
Generate the bootstrap observations $(Z_i^b, \Delta_i^b, \xi_i^b, \zeta_i^b)$, $i=1, \ldots, n$, according to the definition of censoring in survival data, i.e., 
$$Z=min(T,C,A),
\quad \Delta=I(Z=T), 
\quad \xi=I(Z=C),
\quad  \text{and} 
\quad  \zeta=\Delta+\xi.$$

5. 
Calculate the bootstrap Cramér-von Mises statistic $T_{CM}^{*b}$ for each  bootstrap sample obtained in Step 4.

6. 
Repeat Steps 1-5 for each $b = 1, \ldots, B$. Then, we can estimate the $p$-value by
$$\hat{p}=\frac{1}{B} \sum_{i=1}^{B}
I( T_{CM}^{*b} > T_{CM}  ). $$


\section{Theoretical properties of the estimators}\label{sec:property}

The following regularity conditions are required to establish the asymptotic properties of our estimators.

\begin{enumerate}[(C1)]\setlength{\itemsep}{-4pt}
    \item\label{C1}   
    The true value $\theta_0$ lies within a known compact set $\Theta$ in $\mathcal{R}^{p+q+3}$.

    \item\label{C2}   The true function $H_0$ is monotonically  increasing and has a continuous and positive derivative.
    \item\label{C3}    (i) The covariates $X$ and $W$ are uniformly bounded, taking values in known compact sets $\mathcal{X} \subset \mathcal{R}^p$ and $\mathcal{W} \subset \mathcal{R}^q$, respectively.
	(ii) It is assumed that $E\{XX^T\}>0$ and $E\{WW^T\}>0$.
 
    \item\label{C4}  For any $\delta> 0$, there exists $\varepsilon  > 0$ such that 
    $$\inf_{ ||\theta - \theta_0|| >\delta} ||E[U\{ Z, \Delta, \xi, \theta, H_0(\cdot )\}]||> \varepsilon, $$ 
    where $||\cdot||$ is the Euclidean norm.
    \item\label{C5}  
    The density function $f$ is positive and satisfies
    $$\limsup_{|t| \to \infty} |t|^{1+v} f(t) < \infty,$$
    for some constant $v > 0$.

    \item\label{C6} 
    Consider the function $\varpi(z) = f_{\varepsilon_T}(z) (1 - \mathcal{C}'_1(F_{\varepsilon_T}(z), F_{\varepsilon_C}(z))) + f_{\varepsilon_C}(z) (1 - \mathcal{C}'_2(F_{\varepsilon_T}(z), F_{\varepsilon_C}(z)))$. 
     Define the function $\phi(z)=-d log(\varpi(z) )/dz $. The function $\phi(z)$ is strictly increasing and converges to finite limits as $z \to \pm \infty$, i.e.,
    \begin{equation}
        \lim_{z \to -\infty} \phi(z) = \phi_{-\infty}, 
    \quad
    \lim_{z \to +\infty} \phi(z) = \phi _{+\infty}.
    \label{eq:C6_limit}
    \end{equation}

    \item\label{C7}  
    The second derivative of the function $\phi$, defined in (C\ref{C6}), is bounded and continuous.
\end{enumerate}

It is worth noting that conditions (C\ref{C1})-(C\ref{C3}) are prevalent in the statistical literature. 
Condition (C\ref{C4}) is an identification condition introduced in \cite{Deresa2021}.
Condition (C\ref{C5}) establishes that the tail of density $f$ is not too heavy.
(C\ref{C5})-(C\ref{C7}) also used in \cite{Tong2019}.
Condition (C\ref{C6}) ensures that the estimator $\hat{H}(z)$ behaves appropriately at the boundaries, indicating that the limits of the denominator in \eqref{eq:stepform} exist and are finite as $z \to -\infty$ and $z \to \infty$. Condition (C\ref{C7}) guarantees the smoothness and stability of the model. Both Conditions (C\ref{C6}) and (C\ref{C7}) are crucial in the proof of Lemma \ref{lemma:distance}, which is subsequently used to establish the asymptotic normality of the proposed estimators.
 In addition, many density functions
satisfy conditions (C\ref{C4})–(C\ref{C7}) and one example is when the hazard function of $\varepsilon $
takes the form $exp(t)/(1 + \lambda exp(t))$ with $\lambda \ge 0$ \citep{Chen2002, Chentong}.
The proof that limits \eqref{eq:C6_limit} in Condition (C\ref{C6}) holds under this density is provided in the Supplementary Material, where it is demonstrated under three commonly used copula models: Frank, Gumbel, and Gaussian.

Subsequently, we present the asymptotic properties of the estimators $\hat{\theta}_n$ and $\hat{H}_n$. The proofs are given in the appendix.

\begin{theorem}
    (Consistency)
	Assume that Conditions (C\ref{C1})-(C\ref{C5}) given above hold true and that Assumptions (A\ref{A1})-(A\ref{A9}) are satisfied.
    Then for any $\tau \in (0,\infty)$, we have
     $\hat{\theta}_n \to \theta_0$  
     and
     $\sup_{t\in [-\tau,\tau]} |\hat{H}(t)- H_0(t)| \to 0$ almost surely.
\label{theorem:consistency}
\end{theorem}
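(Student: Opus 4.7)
The plan is to follow a standard argmax consistency argument for a semiparametric MLE with a monotone nonparametric nuisance $H$, combining Helly's selection theorem with a uniform law of large numbers and the identifiability result in Theorem~\ref{theorem:identification}. Compactness of $\Theta$ via Condition (C\ref{C1}) contains $\hat\theta_n$ automatically, so the bulk of the work is to tame the step-function estimator $\hat H_n$.

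The first and hardest step is to show that for any fixed $\tau\in(0,\infty)$, $\hat H_n(\tau)$ and $\hat H_n(-\tau)$ are almost surely bounded for large $n$. Writing $\hat H_n(\tau)=\sum_{0\le Z_k\le\tau}\hat H_n\{Z_k\}$ and using the closed-form jump update (\ref{eq:stepform}), the problem reduces to uniform two-sided control of the denominators $\sum_i I(Z_i\ge Z_k)\psi(Z_i)$. Condition (C\ref{C6}) is the key ingredient: since $\phi(z)=-d\log\varpi(z)/dz$ is strictly increasing with finite limits at $\pm\infty$, the quantities $\psi(Z_i)$ entering the jump formula stay in a bounded positive range uniformly over $\theta\in\Theta$ and over the relevant range of $H$. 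Combined with $P(Z\ge\tau)>0$, which follows from (A\ref{A3}), (A\ref{A2}) and (A\ref{A5}) together with bounded covariates (C\ref{C3}), this yields an almost-sure upper bound for $\hat H_n(\tau)$; a symmetric argument on the negative half-line gives a lower bound for $\hat H_n(-\tau)$. I expect this to be the main obstacle because the bound must be made self-consistent against the unknown current iterate and the a priori unbounded range of $H$; the argument parallels the boundedness step in \citet{Tong2019} and the corresponding lemma in \citet{Deresa2021}.

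Once uniform boundedness is in hand, Helly's selection theorem provides a subsequence along which $\hat H_n(t)\to H^*(t)$ at every continuity point of some monotone $H^*$, and by compactness of $\Theta$ we may also assume $\hat\theta_n\to\theta^*$ along the same subsequence. I would then verify that the log-likelihood class in (\ref{loglikefunc}), indexed by $(\theta,H)$ with $H$ in the relevant uniformly bounded monotone subclass of $\mathcal{H}$, is $P$-Glivenko--Cantelli; this uses bounded covariates (C\ref{C3}), the tail bound (C\ref{C5}) to control envelopes, and smoothness of the parametric copula and marginals. The uniform law combined with the defining maximising property of $(\hat\theta_n,\hat H_n)$ (equivalently, their satisfying the score and jump equations) shows that $(\theta^*,H^*)$ maximises the expected log-likelihood. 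A standard Kullback--Leibler inequality identifies the maximiser as $(\theta_0,H_0)$, and Theorem~\ref{theorem:identification} together with the identification condition (C\ref{C4}) forces $(\theta^*,H^*)=(\theta_0,H_0)$. Since every convergent subsequence has the same limit, $\hat\theta_n\to\theta_0$ and $\hat H_n(t)\to H_0(t)$ almost surely at each $t\in[-\tau,\tau]$; continuity of $H_0$ from (C\ref{C2}) and monotonicity of $\hat H_n$ then upgrade pointwise to uniform convergence via the classical Dini-type argument for monotone functions, completing the proof.
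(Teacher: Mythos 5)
Your overall skeleton --- pass to a convergent subsequence $(\theta^*,H^*)$, compare likelihoods, conclude by a Kullback--Leibler inequality and Theorem~\ref{theorem:identification} --- is the same as the paper's, which adapts Theorem~2.2 of \citet{Tong2019}. But the step you dispose of in one sentence, ``verify that the log-likelihood class in (\ref{loglikefunc}) is $P$-Glivenko--Cantelli,'' is where the real difficulty sits, and as stated it does not go through. The objective contains the terms $\log H\{Z_i\}+\log n$: these are logarithms of data-dependent point masses of the step function $\hat H_n$, not evaluations of a fixed function class indexed by $(\theta,H)$, so there is no envelope and no uniform law of large numbers to apply to them directly. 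The paper avoids this by partitioning $[-M,M]$ into a grid $t_0<\dots<t_N$, bounding the cell sums of $\log\hat H_n\{Z_i\}$ by $\log\bigl\{(\hat H_n(t_{k+1})-\hat H_n(t_k))/n_k\bigr\}$, introducing a comparison function $\tilde H_n$ that matches $H_0$ at the grid points and spreads its increments uniformly over the observations in each cell, and then controlling the resulting entropy terms $p_k\log p_k$ via Jensen's inequality and Lemma~A.3 of \citet{Tong2019}. A further step you omit entirely is needed before the KL argument can identify the densities: one must show $H_0$ is absolutely continuous with respect to $H^*$ (the paper does this with the duality $x\log x=\sup_\alpha\{\alpha x-e^{\alpha-1}\}$); without it the limit $H^*$ could concentrate its increments on a null set and the conclusion $f^*_{V|X,W}=f^0_{V|X,W}$ would not follow.

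Two smaller points. First, your boundedness step invokes Condition (C\ref{C6}), but Theorem~\ref{theorem:consistency} assumes only (C\ref{C1})--(C\ref{C5}); Conditions (C\ref{C6})--(C\ref{C7}) are reserved in the paper for the lemma underlying asymptotic normality, so you are leaning on a hypothesis you do not have. (The paper itself does not make precompactness of $\{\hat H_n\}$ explicit --- it works with an arbitrary convergent subsequence and uses the Hewitt--Savage zero--one law to make the limit deterministic --- so making tightness explicit is a reasonable instinct, but it would have to be extracted from the likelihood comparison under (C\ref{C1})--(C\ref{C5}), not from (C\ref{C6}).) Second, your final monotone-plus-continuous-limit upgrade from pointwise to uniform convergence on $[-\tau,\tau]$ is correct and is implicitly what the paper relies on.
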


\begin{theorem}
	 (Asymptotic normality)
    Assume that Conditions (C\ref{C1})-(C\ref{C7}) given above hold and that Asssumptions (A\ref{A1})-(A\ref{A9}) are satisfied. Then  as $n \to \infty$, we have that
    $\sqrt{n}(\hat{\theta}_n - \theta_0) \xrightarrow{d} N(0, A_\theta^{-1} \Sigma_\theta (A_\theta^{-1})^T),$
    where $A_\theta$ and $\Sigma_\theta$ are given by (\ref{Atheta}) and (\ref{Sigmatheta}) in the \ref{sec:mynorm2}, respectively.
\label{theorem:normal}
\end{theorem}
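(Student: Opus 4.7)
The plan is to treat $\hat{\theta}_n$ as a Z-estimator with infinite-dimensional nuisance $\hat{H}_n$ and to derive an asymptotic linear representation via the standard three-term decomposition of the plug-in score. By Theorem \ref{theorem:consistency} and (C\ref{C1})--(C\ref{C3}), we may restrict attention to a shrinking neighbourhood of $(\theta_0,H_0)$. The first step is to Taylor-expand the score equation $\mathbb{P}_n U(Z,\Delta,\xi;\hat{\theta}_n,\hat{H}_n)=0$ in $\theta$ around $\theta_0$, yielding
$$0=\mathbb{P}_n U(\cdot\,;\theta_0,\hat{H}_n)+\bigl\{\mathbb{P}_n \partial_\theta U(\cdot\,;\tilde{\theta}_n,\hat{H}_n)\bigr\}(\hat{\theta}_n-\theta_0)$$
for some intermediate $\tilde{\theta}_n$. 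I would show that the bracketed matrix converges in probability to $A_\theta=E[\partial_\theta U(\cdot\,;\theta_0,H_0)]$ by a Glivenko--Cantelli argument on the class of scores, with integrable envelope supplied by (C\ref{C3}) and (C\ref{C5}), combined with the uniform consistency of $\hat{H}_n$ on compacts and the smoothness from (C\ref{C7}); invertibility of $A_\theta$ is inherited from (C\ref{C4}). This reduces the problem to analysing $\sqrt{n}\,\mathbb{P}_n U(\cdot\,;\theta_0,\hat{H}_n)$.

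The main obstacle is the second step: controlling the effect of the nonparametric nuisance. I would use the decomposition
$$\sqrt{n}\,\mathbb{P}_n U(\cdot\,;\theta_0,\hat{H}_n)=\sqrt{n}\,\mathbb{P}_n U(\cdot\,;\theta_0,H_0)+\sqrt{n}(\mathbb{P}_n-P)\Delta U_n+\sqrt{n}\,P\Delta U_n,$$
where $\Delta U_n=U(\cdot\,;\theta_0,\hat{H}_n)-U(\cdot\,;\theta_0,H_0)$. The first term is handled by the ordinary CLT. The middle term I would show is $o_p(1)$ by stochastic equicontinuity: the class $\{U(\cdot\,;\theta_0,H)\}$ indexed by monotone differentiable $H$ in a sup-norm neighbourhood of $H_0$ has controlled bracketing entropy under (C\ref{C6})--(C\ref{C7}), and uniform consistency of $\hat{H}_n$ kills the empirical-process drift. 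The third (bias) term is the critical piece. The idea is to combine the pathwise derivative of $PU(\cdot\,;\theta_0,H)$ in $H$ at $H_0$ with an asymptotic linear expansion
$$\hat{H}_n(t)-H_0(t)=\mathbb{P}_n \psi_t + o_p(n^{-1/2}),\quad\text{uniformly for } t\in[-\tau,\tau],$$
where $\psi_t$ is the influence function obtained by linearising the fixed-point iteration (\ref{eq:stepform}) around $H_0$, following \citet{Tong2019} and \citet{Deresa2021}. Conditions (C\ref{C6})--(C\ref{C7})---in particular the finite limits $\phi_{\pm\infty}$ which keep the denominator of the iteration bounded uniformly in $z$---are precisely what permits this representation; this is the content of the auxiliary Lemma \ref{lemma:distance} referenced in the excerpt. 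Establishing this iid expansion \emph{uniformly} on $[-\tau,\tau]$ is the hardest part of the argument, since the linearised operator must be inverted in a Banach space of cadlag functions whose domain is not compactly supported.

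In the third step I would substitute the expansion of $\hat{H}_n-H_0$ into the Fr\'echet derivative, so that $\sqrt{n}\,P\Delta U_n=\sqrt{n}\,\mathbb{P}_n(B\psi)+o_p(1)$ for a bounded linear operator $B$ identified from the smoothness of $U$ in its $H$-argument; this produces an additional iid contribution $\Gamma(Z_i,\Delta_i,\xi_i,X_i,W_i)$. Collecting all pieces yields
$$\sqrt{n}(\hat{\theta}_n-\theta_0)=-A_\theta^{-1}\frac{1}{\sqrt{n}}\sum_{i=1}^n\bigl\{U(Z_i,\Delta_i,\xi_i;\theta_0,H_0)+\Gamma_i\bigr\}+o_p(1),$$
and the multivariate CLT (with $\Sigma_\theta=\mathrm{Var}(U+\Gamma)$ finite by (C\ref{C3}) and (C\ref{C5})) delivers the claimed limit $N(0,A_\theta^{-1}\Sigma_\theta(A_\theta^{-1})^T)$, matching the formulas for $A_\theta$ and $\Sigma_\theta$ displayed in the appendix.
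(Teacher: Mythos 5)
Your overall architecture (plug-in score expansion, stochastic equicontinuity for the empirical-process term, and an asymptotic linear representation of the nuisance estimator obtained by linearising the fixed-point iteration and inverting the linearised operator) is the same family of argument as the paper's, which follows \citet{Tong2019}. But there is a concrete gap that would produce the wrong limiting covariance. You identify the Jacobian as $A_\theta=E[\partial_\theta U(\cdot;\theta_0,H_0)]$ and you posit an expansion $\hat H_n(t)-H_0(t)=P_n\psi_t+o_p(n^{-1/2})$ that contains no $\theta$-dependent term. However, $\hat H_n$ is a profile-type estimator: the iteration (\ref{eq:stepform}) is run at the current value of $\theta$, so the correct linearisation of $\hat H_n-H_0$ (equation \eqref{eq:HnH0minus} in the appendix, after inverting $I+\mathcal{G}$) is
$$
\hat H_n(\cdot)-H_0(\cdot)=P_n(I+\mathcal{G})^{-1}[\mathcal{J}_1](\cdot)-(I+\mathcal{G})^{-1}[\mathcal{J}_2](\cdot)'(\hat\theta_n-\theta_0)+o_p(d_{T_n}(\hat s_n,s_0)+n^{-1/2}),
$$
with a nonvanishing term proportional to $\hat\theta_n-\theta_0$. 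When this is substituted into your bias term $\sqrt{n}\,P\Delta U_n$, it contributes $-\mathcal{A}[(I+\mathcal{G})^{-1}[\mathcal{J}_2]]'\sqrt{n}(\hat\theta_n-\theta_0)$, which is $O_p(1)$, not $o_p(1)$, and must be moved to the left-hand side, changing the Jacobian to the paper's $A_\theta=E\{U'_\theta\}-\mathcal{A}[(I+\mathcal{G})^{-1}[\mathcal{J}_2](Z)]'$ in (\ref{Atheta}). Writing $\sqrt{n}\,P\Delta U_n=\sqrt{n}\,P_n(B\psi)+o_p(1)$ with a purely iid right-hand side, as you do, silently drops this feedback and yields a sandwich formula that does not match (\ref{Atheta})--(\ref{Sigmatheta}).

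A secondary issue: your uniform expansion is claimed only on a fixed compact $[-\tau,\tau]$, but the score involves $\hat H_n(Z_i)$ at unbounded observation times. The paper handles this by introducing a truncation sequence $T_n$ with $P(|Z|\ge T_n)=o(n^{-3/4})$, working with the truncated $\tilde H_n$, and using Lemma \ref{lemma:distance} together with the finite tail limits $\phi_{\pm\infty}$ of (C\ref{C6}) to show the tail contribution is negligible at rate $o_p(n^{-1/2})$. You acknowledge the non-compact domain as "the hardest part" but do not supply the truncation device that resolves it, so this step remains open in your proposal.
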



\section{Simulation}\label{sec:simulation}

In this section, we assess the estimation procedure through an extensive simulation study. We compare our estimator with two existing approaches: the method presented in \cite{Deresa2022}, which accommodates a bivariate non-normal distribution of errors but requires a parametric specification of the transformation function, and the semiparametric method proposed by \cite{Deresa2021}, similar to ours, but assumes the error vector follows a standard bivariate normal distribution. The comparison with simulation results from the method of \cite{Deresa2021}  illustrates the effects of misspecifying the marginal distribution. Additionally, we conduct experiments to investigate the impact of misspecifying the copula structure on the results. Finally, we evaluate the performance of the proposed goodness-of-fit test by analyzing various simulation scenarios.

For the simulation, we consider the following copula models:
\begin{equation}
    \mathcal{C}_{r}(u,v)= -\frac{1}{r}
\log\{ 1 + \frac{(e^{-ru}-1 )(e^{-rv}-1 ) }{e^{-r}-1 }  \}, r\ne 0,
\label{simu:Frankcop}
\end{equation}
\begin{equation}
    \mathcal{C}_{r}(u,v)=  \Phi_r(\Phi^{-1}(u),\Phi^{-1}(v)), -1<r<1,
\label{simu:Claytoncop}
\end{equation}
\begin{equation}
    \mathcal{C}_{r}(u,v)= exp \{ - [ (-\log u)^r +  (-\log v)^r ]^{1/r} \}, r\ge 1,
\label{simu:Gumbelcop}
\end{equation}
which are usually referred to as Frank, Gaussian, and  Gumbel models respectively.
The range of the association parameter $r$ is different in the three copula families. In the following, we use Kendall’s $\tau$ as a global measure of association between the failure time $T$ and the observation time $C$. 
For the Frank copula \eqref{simu:Frankcop} we have an integral form, 
$\tau=1-\frac{1}{r}+\frac{1}{r^2}\int_{0}^{r} \frac{t}{e^t-1}dt $, 
while for Gaussian copula \eqref{simu:Claytoncop} and Gumbel copula \eqref{simu:Gumbelcop}, we have $\tau=2arcsin(r)/\pi$ and 
 $\tau=1-1/r$, respectively.

\textbf{Scenario 1}:
In this setting, we generate $i.i.d.$ data from the following  model:
$$
\left\{\begin{matrix} 
  H(T)=\beta_1 X_1 + \beta_2 X_2 + \varepsilon_T, \\  
  H(C)=\eta_1 X_1 + \eta_2 X_2 + \varepsilon_C, 
\end{matrix}\right. 
$$
where $X_1 \sim Ber(0.5)$, $X_2 \sim U_n(0, 5)$ with $X_1$ and $X_2$ being independent.
The vector
$(\varepsilon_T,\varepsilon_C)$ follows the distribution  $P(\varepsilon_T<t,\varepsilon_C<c)=\mathcal{C}_r (F_{\varepsilon_T}(t),F_{\varepsilon_C}(c))$, where the copula $\mathcal{C}_r $ is a Frank copula with $r=10$, which corresponds to Kendall's correlation $\tau=0.67$. 
The marginal distributions of $\varepsilon_T$ and $\varepsilon_C$ accord  to the hazard function
$exp(t)/(1 + \lambda exp(t))$ \citep{Chen2002,Chentong} with  
$\lambda_T=0.5$ and $\lambda_C=0.8$,  respectively.
The  regression parameters are $\beta_1=0.6$,  $\beta_2=1.4$, $\eta_1=0.4$, $\eta_2=0.8$.
The administrative censoring times are drawn from a uniform distribution over [0, 3].

We consider the following two different transformation functions and then compare our proposed method with the method proposed by \cite{Deresa2021} for estimation, referred to as the PT (Parametric Transformation) method 
$$
Case \, 1: H_1(z)= H_{Y}^{\alpha}(z), \text{where} \, \, \alpha=0.5, 
\qquad \qquad
Case  \, 2: H_2(z)=z^3 .
$$
where $ H_{Y}^{\alpha}(z)$ is the Yeo-Johnson family of transformations \citep{Yeo2000}, given by
$$
 H_{Y}^{\alpha}(z)= \begin{cases} 
[(z+1)^{\alpha } -1]/\alpha  ,\quad &z\ge 0, \alpha \ne 0 \\
log(z+1) ,\quad &z\ge 0, \alpha = 0 \\
-[(-z+1)^{2-\alpha }-1]/(2-\alpha ),\quad &z< 0, \alpha \ne 2 \\
-log(-z+1),\quad &z< 0, \alpha= 2 .
\end{cases} 
$$
with parameter $0 \le \alpha \le 2$. 
Since the PT method, which is based on parametric marginals and parametric copulas, is a  fully parametric approach,  
we implement their method assuming a parametric form for the transformation function. 
Specifically, we estimate their model using the Yeo-Johnson family of transformations \citep{Yeo2000}.
Since the first transformation function $H_1(z)$ is a special case of the \cite{Yeo2000} transformations family with a transformation parameter of 0.5, while $H_2(z)$  is not.
As a result, the model proposed by \cite{Deresa2022} is misspecified under Case 2.
Under this scenario, the average proportion of dependent censoring and administrative censoring are 35$\%$ and 15$\%$ respectively.

Table \ref{simu:compare}  presents the bias (BIAS), standard deviation (SD), root mean squared error of (RMSE) and  the $95\%$ empirical  coverage probability (CP) for two cases.
To evaluate bias, a commonly used rule of thumb suggests that biases are unlikely to  significantly impact  on inferences unless the standardized bias (bias as a percentage of the standard deviation) exceeds $40\%$ \citep{Olsen2001, Deresa2021}. 
The parameter estimators from  the proposed method and the \citet{Deresa2022} method are both below the threshold for the $H_1$ transformation, indicating that these methods produce satisfactory estimators for the model parameters.
It is expected that the  \citet{Deresa2022}  method outperforms the proposed method in terms of bias  under the Case 1, since the transformation function is correctly specified. 
However, the proposed method exhibits similar root mean squared errors and coverage probabilities to the method by \citet{Deresa2022} for the $H_1$ transformation. 
These CP values are close to the $95\%$ nominal level, indicating that  the asymptotic normality of our estimators is reasonable.
When the true transformation function is $H_2$, some parameter estimators from the \citet{Deresa2022} model are inconsistent, whereas the proposed method provides significantly better results with smaller bias and root mean squared error values. This highlights the flexibility of the proposed method. Additionally, a simulation evaluating the effects of $\lambda_T$ and $\lambda_C$ values is provided in the Supplementary Material. The results indicate that the proposed method performs well regardless of whether $\lambda_T = \lambda_C$ or not.

\begin{table}[]
\centering
\resizebox{0.8\linewidth}{!}{  
\begin{tabular}{lccccccccc}
\toprule
   &            & \multicolumn{4}{c}{Proposed method} & \multicolumn{4}{c}{PT method}  \\
\cmidrule(lr){2-6}\cmidrule(lr){7-10}
      & Parameters  & Bias     & SD     & RMSE   & CP     & Bias   & SD    & RMSE  & CP    \\
\midrule
$H_1$ & $\beta_1$   & 0.008    & 0.160  & 0.160  & 0.950  & -0.009 & 0.153 & 0.153 & 0.938 \\
      & $\beta_2$   & 0.010    & 0.241  & 0.241  & 0.966  & -0.005 & 0.162 & 0.162 & 0.950 \\
      & $\eta_1$    & 0.009    & 0.134  & 0.134  & 0.944  & -0.002 & 0.132 & 0.132 & 0.936 \\
      & $\eta_2$    & 0.006    & 0.174  & 0.174  & 0.950  & 0.020  & 0.167 & 0.168 & 0.942 \\
      & $\tau$      & 0.021    & 0.245  & 0.246  & 0.968  & 0.029  & 0.251 & 0.253 & 0.970 \\
      & $\lambda_T$ & 0.005    & 0.227  & 0.226  & 0.960  & 0.000  & 0.111 & 0.111 & 0.964 \\
      & $\lambda_C$ & 0.007    & 0.355  & 0.355  & 0.962  & -0.013 & 0.196 & 0.197 & 0.956 \\
\\
$H_2$ & $\beta_1$   & 0.004    & 0.157  & 0.157  & 0.948  & -0.104 & 0.140 & 0.174 & 0.874 \\
      & $\beta_2$   & -0.012   & 0.239  & 0.239  & 0.964  & -0.364 & 0.194 & 0.413 & 0.701 \\
      & $\eta_1$    & 0.008    & 0.131  & 0.131  & 0.948  & -0.026 & 0.123 & 0.125 & 0.946 \\
      & $\eta_2$    & 0.006    & 0.170  & 0.170  & 0.952  & -0.357 & 0.156 & 0.390 & 0.689 \\
      & $\tau$      & 0.026    & 0.268  & 0.269  & 0.969  & -0.066 & 0.275 & 0.283 & 0.970 \\
      & $\lambda_T$ & -0.006   & 0.216  & 0.216  & 0.950  & -0.374 & 0.110 & 0.385 & 0.524 \\
      & $\lambda_C$ & -0.014   & 0.315  & 0.315  & 0.960  & -0.478 & 0.158 & 0.503 & 0.654 \\
\bottomrule
\end{tabular}}
\caption{
\textcolor{red}{Simulation results based on 500 simulation runs  using the proposed method, \cite{Deresa2022} method (PT method)  under the transformation
functions $H_1$ and $H_2$.
The standard deviation (SD), root mean squared error (RMSE), and $95\%$ empirical  coverage probability (CP) are denoted by the respective abbreviations.}
 }  
\label{simu:compare}
\end{table}

\textbf{Scenario 2}:
We aim to evaluate the sensitivity of the proposed method to the misspecification of the error distribution. As stated, the proposed model is designed for nonnormal transformation models. 
In this scenario, we generate data from a Frank copula, with $T$ and $C$ following a $t$-distribution.
The remaining settings are the same as in Scenario 1 under Case 1.
We then  compare the results of the proposed method with those of \cite{Deresa2021}, which employs the bivariate normal transformation model for dependent censoring (referred to as the NT method).
Consequently, both the proposed method and the NT method are misspecified. It is straightforward to demonstrate that as the degrees of freedom of the $t$-distribution increase, the distribution approaches normality. Therefore, we consider two degrees of freedom, $d = 2$ and $d = 8$, to represent strong and weak deviations from the normal model, respectively.

Table \ref{simu:com_NT} presents the bias, SD, RMSE, and CP for the two methods. For the NT method, the values are smaller when $d = 8$ than when $d = 2$, consistent with the findings of \cite{Deresa2021}, indicating that the NT method is more suitable for normal error terms. For the proposed method, the values for $d = 2$ are significantly smaller than those for $d = 8$ and are similar to the results for the proposed method in Table \ref{simu:compare}, suggesting that our method is robust in this case. As the distribution approaches a normal transformation, the bias and RMSE increase and exceed those reported from NT method, indicating that the proposed method is not suitable when the error terms are approximately normal. Therefore, we recommend the goodness-of-fit test presented in Section \ref{sec:goodness}.

\begin{table}[]
\centering
\resizebox{0.8\linewidth}{!}{  
\begin{tabular}{lccccccccc}
\toprule
&            & \multicolumn{4}{c}{Proposed method} & \multicolumn{4}{c}{NT method}  \\
\cmidrule(lr){2-6}\cmidrule(lr){7-10}
      & Parameters & Bias     & SD     & RMSE   & CP     & Bias   & SD    & RMSE  & CP    \\
\midrule
$d=2$ & $\beta_1$  & -0.001   & 0.133  & 0.133  & 0.942  & -0.220 & 0.130 & 0.256 & 0.608 \\
      & $\beta_2$  & 0.062    & 0.187  & 0.197  & 0.936  & -0.641 & 0.130 & 0.654 & 0.310 \\
      & $\eta_1$   & 0.078    & 0.124  & 0.147  & 0.920  & -0.063 & 0.130 & 0.144 & 0.930 \\
      & $\eta_2$   & 0.293    & 0.169  & 0.338  & 0.597  & -0.172 & 0.117 & 0.208 & 0.678 \\
      & $\tau$     & 0.178    & 0.419  & 0.455  & 0.998  & 0.235  & 0.058 & 0.242 & 0.413 \\
      \\
$d=8$ & $\beta_1$  & 0.356    & 0.202  & 0.409  & 0.591  & -0.093 & 0.127 & 0.149 & 0.901 \\
      & $\beta_2$  & 0.812    & 0.325  & 0.875  & 0.311  & -0.192 & 0.227 & 0.359 & 0.687 \\
      & $\eta_1$   & 0.233    & 0.143  & 0.273  & 0.629  & 0.035  & 0.118 & 0.123 & 0.941 \\
      & $\eta_2$   & 0.457    & 0.200  & 0.499  & 0.371  & 0.051  & 0.133 & 0.143 & 0.931 \\
      & $\tau$     & 0.011    & 0.267  & 0.267  & 0.998  & 0.193  & 0.095 & 0.229 & 0.409 \\
\bottomrule
\end{tabular}}
\caption{
Simulation results based on 500 simulation runs  using the proposed method, \cite{Deresa2021} method (NT method)  when $T$ and $C$ follow $t$-distribution.
The standard deviation (SD), root mean squared error (RMSE), and $95\%$ empirical  coverage probability (CP) are denoted by the respective abbreviations.}
\label{simu:com_NT}
\end{table}


\textbf{Scenario 3}:
In this scenario, we are interested in the sensitivity of model parameter estimation to the misspecification of the copula structure. To investigate this, we generate data similar to Scenario 1 for $H_1$. However, we analyze it using Gumbel and Gaussian copula models while keeping the marginal distribution for the error terms vector correct. Since the data is generated from a model with a Frank copula, our analysis is based on a misspecified copula. Table \ref{simu:copula} shows the bias, SD, RMSE, and CP of the parameter estimators.

The results indicate that most of the estimators remain unbiased, and the CP values are close to the nominal level, suggesting that the estimation of the regression parameters is not significantly affected by the misspecified copula structure. This outcome is consistent with the findings of \cite{Huang2008}, \cite{Chen2010}, and \cite{Deresa2022}, who observed that copula structure misspecification typically results in moderate bias. 
We also observe that the bias of $\lambda_T$ and $\lambda_C$ under the Gaussian copula  are larger than under the Gumbel copula, likely due to  differences in the shapes of copula functions.
Additionally, Kendall's $\tau = 0.67$ corresponds to the Gaussian association parameter $r = 0.8653$, which is close to the bounds of $r$, potentially leading to bias in the marginal distribution estimators.

\begin{table}[!htb]
\centering
\begin{tabular}{ccccccccc}
\toprule
        & \multicolumn{4}{c}{Gumbel copula} & \multicolumn{4}{c}{Gaussian copula} \\
\cmidrule(lr){2-5}\cmidrule(lr){6-9}
            & Bias    & SD     & RMSE   & CP    & Bias     & SD     & RMSE   & CP     \\
\midrule
$\beta_1$    & 0.082   & 0.203  & 0.231  & 0.922 & 0.043    & 0.145  & 0.151  & 0.945  \\
$\beta_2$    & 0.098   & 0.249  & 0.297  & 0.924 & -0.092   & 0.235  & 0.255  & 0.916  \\
$\eta_1$     & 0.060   & 0.158  & 0.163  & 0.926 & 0.057    & 0.149  & 0.165  & 0.911  \\
$\eta_2$     & 0.020   & 0.190  & 0.191  & 0.948 & -0.053   & 0.154  & 0.163  & 0.924  \\
$\tau$      & -0.031  & 0.141  & 0.144  & 0.964 & -0.002   & 0.119  & 0.124  & 0.935  \\
$\lambda_T$ & 0.091   & 0.229  & 0.235  & 0.926 & -0.145   & 0.177  & 0.229  & 0.867  \\
$\lambda_C$ & 0.057   & 0.333  & 0.338  & 0.954 & -0.190   & 0.335  & 0.385  & 0.947 \\
\bottomrule
\end{tabular}
\caption{Simulation results  based on 500 simulation runs  using the proposed method, with Gumbel and Gaussian copulas  when the data are generated from a Frank copula.
The standard deviation (SD), root mean squared error (RMSE), and empirical $95\%$ coverage probability (CP) are denoted by the respective abbreviations.}
\label{simu:copula}
\end{table}


\textbf{Scenario 4}:
In this scenario, 
we aim to assess the effectiveness of the goodness-of-fit test detailed in Section \ref{sec:goodness} across three distinct cases.
For each case, we consider $B= 300$ bootstrap samples obtained from 500 datasets of size $n=300$ and $n=500$, and computed the Cramér-von Mises statistic.
For assessment of the power of the tests, 
we record the rejection rates at the $5\%$ and $10\%$  significance levels over 500 runs.
In the first case, we generate data from  the set same as Scenario 1 for the $H_1$ transformation and then estimate it with correct copula and marginal distribution.
Next, in the second case, we conduct the test with the simulation result of Scenario 2 under  $d=8$, which means we use the misspecified marginal distribution. 
In the third case, we analyzed the data with the misspecified copula function that discussed in the Scenario 3 under Gumbel copula.

Table \ref{simu:goodness} implies the goodness-of-fit test performance. Under Case 1, the rejection rates align closely with the nominal levels.
The mean value of the test statistic $T_{CM}$ closely approximates the mean value of the bootstrap statistics $B^{-1}\sum_{i=1}^{B}T_{CM}^{*b}$.
We observe a similar result for Case 3, aligning with the findings of Scenario 3, which suggests that the proposed method is relatively robust to copula structure misspecification.
The Case 2 in Table \ref{simu:goodness}  demonstrates that the proposed test is highly effective at rejecting the null hypothesis, when the marginal distributions are misspecified. This finding highlights the robustness and reliability of the test, which is consistent with the result of \cite{Deresa2021}.
Besides,   the power of the test increases as the sample size grows from $n = 300$ to $500$.

\begin{table}[!ht]
\centering
\begin{tabular}{cccccc}
\toprule
        & Sample size & $E(T_{CM})$ & $E(T_{CM}^{*,b})$ & Rejection rate at $5\%$ & Rejection rate at $10\%$ \\
\midrule
Case 1 & 300         & 0.116  & 0.108      & 0.041                 & 0.089                  \\
       & 500         & 0.239  & 0.227      & 0.047                 & 0.093                  \\
Case 2 & 300         & 0.407  & 0.283      & 0.586                 & 0.791                  \\
       & 500         & 0.819  & 0.409      & 0.783                 & 0.969                  \\
Case 3 & 300         & 0.134  & 0.132      & 0.054                 & 0.099                  \\
       & 500         & 0.241  & 0.232      & 0.058                 & 0.117                 \\
\bottomrule
\end{tabular}
\caption{Simulation results for evaluating the performance of the goodness-of-fit test.  }  
\label{simu:goodness}
\end{table}


\section{Real data example}
\label{sec:real}

We illustrate the proposed method using an application to the BMT example, where data were collected on patients with acute leukaemia following allogeneic bone marrow transplantation.
The dataset is available in the R package KMsurv and a more detailed account can be found in \cite{Klein1997SurvivalAT}, page 464.
A key focus is the incidence of chronic GVHD, a common complication following transplantation. The GVHD endpoint was potentially subject to dependent censoring due to death. Among the 137 patients who received bone marrow transplants, 61 developed chronic GVHD.
During the follow-up period, 81 patients ($59\%$) passed away, with 52 ($38\%$) of them dying before the onset of chronic GVHD. Patients were categorized into three risk groups at the time of transplantation: ALL, AML low-risk, and AML high-risk. Let $T$ denote the time to the chronic GVHD endpoint, $C$ represent the time to death, and $A$ indicate the duration of the follow-up period, which is variable rather than fixed.

By considering the potential dependence between the time to chronic GVHD development and the time to death, \cite{Li2015} analyzed this dataset using a known Frank copula model.
Their findings suggest a moderate positive association between GVHD and death, given the disease type and age.
In our analysis, we apply the proposed estimation technique to fit the model (\ref{transmodel})-(\ref{eq23}) to this dataset.
We include the following nine variables as covariates in our analysis:
Group (1 for AML high-risk group, 0 for other groups),
patient age (in years),
donor age (in years),
patient gender (0 for female, 1 for male),
donor gender (0 for female, 1 for male),
patient CMV status (0 for CMV Negative, 1 for CMV Positive),
donor CMV status (0-CMV Negative and  1-CMV Positive),
waiting time to transplant (in days),
FAB (0 for Otherwise and 1 for FAB Grade 4 Or 5 and AML). 
We aim to examine the impact of AML high risk disease group on the development of GVHD and death after adjusting for covariates and to estimate the association between $T$ and $C$.


\begin{table}[!ht]
\centering
\begin{tabular}{lcccccc}
\toprule
                   & \multicolumn{3}{c}{Time to GVHD end point} & \multicolumn{3}{c}{Time to death} \\
\cmidrule(lr){2-4}\cmidrule(lr){5-7}
                   & Estimated      & SE         & $p$-value      & Estimated   & SE      & $p$-value   \\
\midrule
Group              & -0.506         & 0.223      & 0.023        & -0.521      & 0.426   & 0.221     \\
Patient age        & 0.240          & 0.116      & 0.038        & 0.198       & 0.113   & 0.081     \\
Donor age          & -0.344         & 0.118      & 0.004        & -0.333      & 0.208   & 0.110     \\
Patient gender     & 0.026          & 0.175      & 0.884        & 0.108       & 0.204   & 0.596     \\
Donor gender       & 0.231          & 0.160      & 0.149        & 0.237       & 0.186   & 0.203     \\
Patient CMV status & -0.494         & 0.139      & 0.000        & -0.230      & 0.179   & 0.199     \\
Donor CMV status   & 0.259          & 0.163      & 0.112        & 0.144       & 0.134   & 0.283     \\
Waiting time       & -0.205         & 0.061      & 0.001        & -0.211      & 0.067   & 0.002     \\
FAB                & 0.155          & 0.222      & 0.483        & 0.060       & 0.197   & 0.763     \\
$\tau$             & 0.661          & 0.323      & 0.041        &             &         &       \\    
\bottomrule
\end{tabular}
\caption{Parameter estimates, bootstrap standard errors and $p$-values using the proposed method for the BMT dataset.}
\label{Table:real}
\end{table}


In Table \ref{Table:real}, we present the parameter estimates, bootstrap standard errors, and $p$-values. The bootstrap standard errors are calculated using 500 bootstrap samples. The $p$-values are based on a Wald statistic, utilizing these bootstrap standard errors. The estimated association parameter for the two endpoints corresponds to a Kendall's correlation $\tau$ of 0.661, indicating a correlation between the time to GVHD and time to death, even after adjusting for covariates. This relatively strong correlation suggests that neglecting this dependency could introduce bias in the parameter estimates.
Furthermore, we find that five covariates (group, patient age, donor age, patient CMV status, and waiting time) have significant effects on the time to the GVHD endpoint, while one covariate (waiting time) significantly affects the time to death. The results in Table \ref{Table:real} highlight the importance of waiting time for both time to chronic GVHD and time to death, suggesting that a short waiting time may effectively delay the onset of both endpoints. 

Additionally, patients in the AML high-risk group exhibit significantly faster progression to chronic GVHD compared to those in the ALL group and AML low-risk group, as indicated by a negative estimated coefficient. However, there is no significant association between the AML high-risk group and time to death, as the $p$-value exceeds 0.05.

\begin{figure} [!htb] 
\centering
\subfigure[ The estimated error $\hat{\varepsilon}_T$.]{
\includegraphics[width=0.4\textwidth]{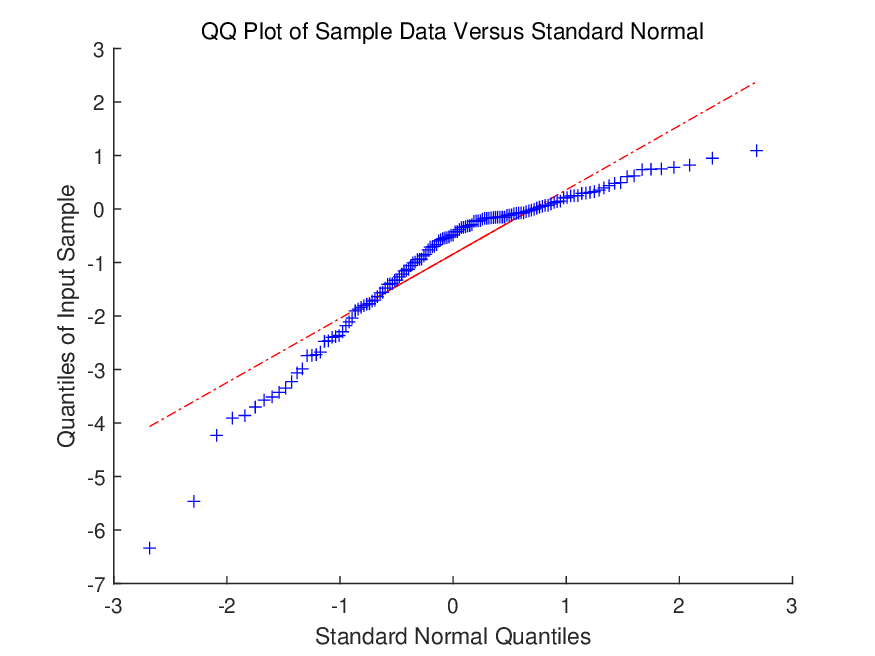}
\label{real:error1}}
\,\,\,\,\,
\subfigure[The estimated error $\hat{\varepsilon}_C$.]{
\includegraphics[width=0.4\textwidth]{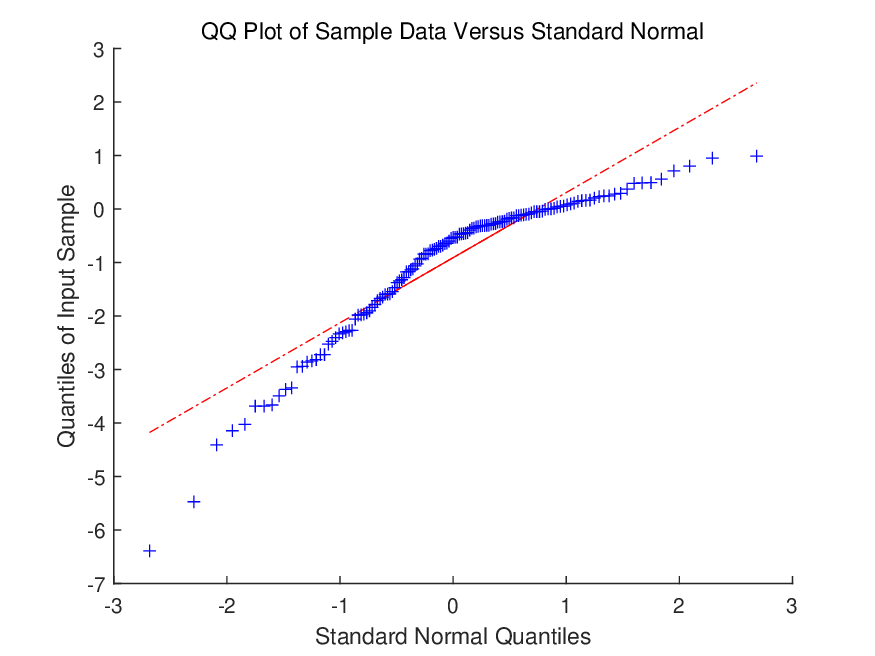}
\label{real:error2}}
\caption{ Quantile plot of the estimated errors against the standard normal distribution.}
\label{figure:error}
\vspace{-4mm}
\end{figure}

Since the proposed model (\ref{transmodel})-(\ref{eq23}) is based on nonnormal error distributions, we aim to test the normality of the data. We generated a Quantile-Quantile (QQ) plot of the errors to assess their normality, as shown in Figure \ref{figure:error}. It is evident that, for the BMT data analyzed in this study, the errors deviate from a normal distribution.
To further evaluate the fit of the proposed model, we conducted a goodness-of-fit test as outlined in Section \ref{sec:goodness}. The Cramér–von Mises type statistic $T_{CM}$ yielded a $p$-value of 0.392 based on 500 bootstrap replications, indicating that the proposed semiparametric model cannot be rejected.


\section{Concluding remarks and future research }
\label{sec:conclusion}

In this paper, we propose a transformed linear model to exploit the association between the survival time $T$ and the dependent censoring time $C$.
To the best of our knowledge, there is limited availability of semiparametric models and inference techniques in the literature for analyzing bivariate survival data with non-normal error distributions. A key advantage of our approach is that the association parameter of the copula function remains unspecified and can be identified even when we observe only either $T$ or $C$ but not both. This identification is achieved by making certain assumptions and conditions outlined in Section \ref{sec:model}. 
As an example, we consider the marginal function (\ref{eq_tong}) in our model, which includes the Cox model and the proportional odds model as special cases when the error follows the extreme value distribution and the logistic distribution, respectively.

Our model offers flexibility for extension in various ways.  Firstly,  the current model can be adapted  to incorporate semiparametric or nonparametric regression functions using techniques like splines, orthogonal series, or kernel methods. 
Secondly, the application of more comprehensive survival models, such as competing risks models, cure models, and truncation in combination with dependent censoring, holds promising potential. 
Thirdly, it would be interesting to investigate whether more efficient copula functions and marginal distributions can be obtained.

\appendix

\renewcommand{\thelemma}{D.\arabic{lemma}} 

\section{Proof of Theorem \ref{theorem:identification}}\label{sec:proof31}

Define a pair of parameters
$s_u=(\beta_u,\eta_u,\lambda_{T,u},\lambda_{C,u},r_u, H_u) \in S$ for $u=1,2$.
We will complete this proof in four steps, following the identification proof methodology outlined in \cite{Deresa2021} and \cite{Deresa2022}.

(Step 1) 
We first prove the identification of $\beta,\eta,H$. 
Since $P(Z = T) > 0$ under Assumption (A\ref{A4}), we can consider $\Delta = 1$ and $\xi= 0$. From the likelihood function it follows that

\begin{equation}\label{proof31:step1}
    \begin{aligned}
&f_{\varepsilon _T,\lambda_{T,1}}(H_1(z)-x^T\beta_1)
S_{\varepsilon _C|\varepsilon _T,\lambda_{T,1},\lambda_{C,1},r_1} (H_1(z)-w^T\eta_1 | H_1(z)-x^T\beta_1)
P(A>z) h_1(z)\\
&=
f_{\varepsilon _T,\lambda_{T,2}}(H_2(z)-x^T\beta_2)
S_{\varepsilon _C|\varepsilon _T,\lambda_{T,2},\lambda_{C,2},r_2} (H_2(z)-w^T\eta_2 | H_2(z)-x^T\beta_2)
P(A>z) h_2(z)
\end{aligned}
\end{equation}
for all $z, x, w$ .
We know from  (A\ref{A6}) that
$$
	\lim_{z \to -\infty} S_{\varepsilon _C|\varepsilon _T,\lambda_{T,j},\lambda_{C,j},r_j}
	(H_j(z)-w^T\eta_j | H_j(z)-x^T\beta_j)=1 ,\quad
	j=1,2.
$$
When $z$ approaches $-\infty$, by taking the limit on both sides of equation (\ref{proof31:step1}), we have
\begin{equation}
    \lim_{z \to -\infty}
\frac{f_{\varepsilon _T,\lambda_{T,1}}(H_1(z)-x^T\beta_1)h_1(z)}
{f_{\varepsilon _T,\lambda_{T,2}}(H_2(z)-x^T\beta_2)h_2(z)}  =1, 
\label{eq:ratiofh}
\end{equation}
for almost all $z$ and $x$.
From Assumption (A\ref{A8}), we know that there exist a function $k(z)$ that satisfies
\begin{equation}
    \lim_{z \to -\infty} \frac{h_1(z)k(z)}{h_2(z)k(z)}=a_0,
\label{eq:ratioh1h2}
\end{equation}
where $a_0>0$ is a constant.
Thus for \eqref{eq:ratiofh},  we multiply the numerator and denominator  by the function $k(z)$ and derive
\begin{equation}
    \lim_{z \to -\infty}
\frac{f_{\varepsilon _T,\lambda_{T,1}}(H_1(z)-x^T\beta_1)}
{f_{\varepsilon _T,\lambda_{T,2}}(H_2(z)-x^T\beta_2)}=a,
\label{proof:rate}
\end{equation}
where $a=1/a_0$.
By L'Hopital's rule,  (\ref{proof:rate})  leads to 
\begin{equation}
        \lim_{z \to -\infty}
\frac{f'_{\varepsilon _T,\lambda_{T,1}}(H_1(z)-x^T\beta_1)h_1(z)}
{f'_{\varepsilon _T,\lambda_{T,2}}(H_2(z)-x^T\beta_2)h_2(z)}=a.
\label{eq:ratio_derif}
\end{equation}
We again use the steps in \eqref{eq:ratiofh} - \eqref{proof:rate}. Then the limit equation (\ref{eq:ratio_derif})  leads to 
$$
    \lim_{z \to -\infty}
\frac{f'_{\varepsilon _T,\lambda_{T,1}}(H_1(z)-x^T\beta_1)}
{f'_{\varepsilon _T,\lambda_{T,2}}(H_2(z)-x^T\beta_2)}=a^2.
$$
Recall that $ g_{\varepsilon _T,\lambda_{T,1}}(z)=\partial \log f_{\varepsilon _T,\lambda_{T,1}}(z) /\partial z  $ and $g_{\varepsilon _T,\lambda_{T,2}}(z)=\partial \log f_{\varepsilon _T,\lambda_{T,2}}(z) /\partial z$.
Then we have
\begin{equation}
    \lim_{z \to -\infty} 
    \frac{g_{\varepsilon _T,\lambda_{T,1}}(H_1(z)-x^T\beta_1)}
    {g_{\varepsilon _T,\lambda_{T,2}}(H_2(z)-x^T\beta_2) }
    =\lim_{z \to -\infty} 
    \frac{f'_{\varepsilon _T,\lambda_{T,1}}(H_1(z)-x^T\beta_1)}
    {f_{\varepsilon _T,\lambda_{T,1}}(H_1(z)-x^T\beta_1) } 
    \frac{f_{\varepsilon _T,\lambda_{T,2}}(H_2(z)-x^T\beta_2)}
    {f'_{\varepsilon _T,\lambda_{T,2}}(H_2(z)-x^T\beta_2) }
    =a.
\end{equation}
Besides from L‘Hopital rule of limits and \eqref{eq:ratioh1h2}, we have
\begin{equation}
 \lim_{z \to -\infty}
\frac{H_1(z)}{H_2(z)}
=\lim_{z \to -\infty}
\frac{h_1(z)}{h_2(z)}
=\frac{1}{a}
\label{proof:h1h2}
\end{equation}
By Assumption (A\ref{A9}), 
\eqref{proof:rate} and  \eqref{proof:h1h2} imply that $a=1$.
Then from Lemma \ref{lemma_H1H2}, we have that 
$\lim_{z \to -\infty} \{ H_1(z) - H_2(z) \}=0$  and 
$\beta_1=\beta_2$. 
Applying the same proof for $\Delta=0$ and $\xi=1$, we find that  $\eta_1=\eta_2$.

(Step 2) 
Then we  identify the parameters  $\lambda_T$ and $\lambda_C$.
We represent the previously identified parameters $\beta$ and $\eta$ as $\beta = (\beta_1, \ldots, \beta_p)^T$ and $\eta = (\eta_1, \ldots, \eta_q)^T$. 
Consider the survival function of the observable random variable $Z$ given $X$ and $W$, which is defined as:
$$
\begin{aligned}
P(Z \ge z| X=x,W=w)
&=P(H(T) \ge H(z),H(C) \ge H(z)| X=x,W=w)P(A\ge z)\\
&=P(\varepsilon _T \ge H(z)-x^T\beta,\varepsilon _C \ge H(z)-w^T\eta)P(A\ge z)\\
&=P(A\ge z) \tilde{\mathcal{C}}  (F_{\varepsilon _T} ( H(z)-x^T\beta),F_{\varepsilon _C} (H(z)-w^T\eta)).\\
\end{aligned}
$$
Since $P(A \le z)$ is identifiable and $\tilde{\mathcal{C}}(u,v)=1-u-v + \mathcal{C}(u,v)$, we consider that
\begin{equation}
    \begin{aligned}
&F_{\varepsilon _T,\lambda_{T,1}} ( H_1(z)-x^T\beta)
+F_{\varepsilon _C,\lambda_{C,1}} (H_1(z)-w^T\eta) 
 -\mathcal{C} _{r_1} (F_{\varepsilon _T,\lambda_{T,1}} ( H_1(z)-x^T\beta),F_{\varepsilon _C,\lambda_{C,1}} (H_1(z)-w^T\eta))\\
&=F_{\varepsilon _T,\lambda_{T,2}} ( H_2(z)-x^T\beta)
+F_{\varepsilon _C,\lambda_{C,2}} (H_2(z)-w^T\eta) 
 - \mathcal{C} _{r_2} (F_{\varepsilon _T,\lambda_{T,2}} ( H_2(z)-x^T\beta),F_{\varepsilon _C,\lambda_{C,2}} (H_2(z)-w^T\eta)).
\end{aligned}
\label{proof1:func_C}
\end{equation}
We take the derivative with respect to $x$ on both sides of the equation and, after a simple transformation, we obtain
$$
\frac{f_{\varepsilon _T,\lambda_{T,1}} ( H_1(z)-x^T\beta) }
{f_{\varepsilon _T,\lambda_{T,2}} ( H_2(z)-x^T\beta)}
=
\frac{1 - \mathcal{C}^{(1)} _{r_2} (F_{\varepsilon _T,\lambda_{T,2}} ( H_2(z)-x^T\beta),F_{\varepsilon _C,\lambda_{C,2}} (H_2(z)-w^T\eta)))}
{1 - \mathcal{C}^{(1)} _{r_1} (F_{\varepsilon _T,\lambda_{T,1}} ( H_1(z)-x^T\beta),F_{\varepsilon _C,\lambda_{C,1}} (H_1(z)-w^T\eta))},
$$
where $\mathcal{C}^{(1)}_r(u,v)=d\mathcal{C}_r(u,v)/du$ and $\mathcal{C}_r^{(2)}(u,v)=d\mathcal{C}_r(u,v)/dv$. 
Notably, $w$ only appears on the right side. Since $\text{var}(W) > 0$, the left side equals a constant for all $z$ and $x$.
By setting $z=0$, we obtain $\lambda_{T,1} = \lambda_{T,2}$.
Repeating the same arguments for the derivatives with respect to $w$, we identify that $\lambda_{C,1} = \lambda_{C,2}$.

(Step 3) 
In this step, we  identify the parameters  $r$.
Let $\lambda_{T}$ and $\lambda_{C}$ represent the already identified parameters of  $f_{\varepsilon _T }$ and $f_{\varepsilon _C }$ respectively. 
Considering \eqref{proof1:func_C} and letting $z=0$, we have
$$
\begin{aligned}
\mathcal{C}_{r_1} (F_{\varepsilon _T,\lambda_{T}} ( -x^T\beta),F_{\varepsilon _C,\lambda_{C}} (-w^T\eta))
=\mathcal{C}_{r_2} (F_{\varepsilon _T,\lambda_{T}} ( -x^T\beta),F_{\varepsilon _C,\lambda_{C}} (-w^T\eta)), 
\end{aligned}
$$
for all $x$ and $w$.
Since the copula is unique, it yields that $r_1=r_2$.

(Step 4)  
Finally, we aim to prove the identification of the transformation function $H$.
From Step2, we know that
$$
\begin{aligned}
\tilde{\mathcal{C}}_{r} (F_{\varepsilon _T,\lambda_{T}} ( H_1(z)-x^T\beta),F_{\varepsilon _C,\lambda_{C}} (H_1(z)-w^T\eta))
= \tilde{\mathcal{C}}_{r} (F_{\varepsilon _T,\lambda_{T}} ( H_2(z)-x^T\beta),F_{\varepsilon _C,\lambda_{C}} (H_2(z)-w^T\eta)),
\end{aligned}
$$
for all $z,x,w$. 
This is only possible if $H_1\equiv H_2$.
This completes the proof.

$\hfill\blacksquare$

\section{Proof of Theorem \ref{theorem:consistency}}\label{sec:myconsis2}

Now we prove the consistency of $(\hat{\theta}_n,\hat{H}_n)$, where $\hat{\theta}_n=(\hat{\beta}_n,\hat{\eta}_n,\hat{\lambda}_{T_n},\hat{\lambda}_{C_n},\hat{r}_n) $ .
The following proof is adapted from Theorem 2.2 of \cite{Tong2019}.
We begin by outlining the proof of Theorem \ref{theorem:consistency}. 
It is sufficient to demonstrate that for any convergent subsequence $(\hat{\theta}_{n_k}, \hat{H}_{n_k})$ of
$(\hat{\theta}_n, \hat{H}_n)$ that satisfies $(\hat{\theta}_{n_k}, \hat{H}_{n_k}) \to (\theta^*,H^*) $, it holds that $(\theta^*,H^*)=(\theta_0,H_0)$. 
To simplify the proof, we use  $\{n_k,k \ge 1\}$ as the original sequence. 
Given that $(\hat{\theta}_n, \hat{H}_n)$ is a symmetric function of the sample $X_1,...,X_n,W_1,...,W_n$, the pair $(\theta^*,H^*)$   is measurable with respect to the exchangeable  $\sigma$-field of ${X_n,n \ge 1}$. 
The Hewitt–Savage’s 0-1 law
implies that $(\theta^*,H^*)$ is a constant. 
In this proof, we use Lemma \ref{lemma:distance}, which, along with its proof, is provided in the \ref{lemma:distance}.

(Step 1)
Consider the likelihood (\ref{loglikefunc}) and  let  $V=min(T,C)$, $\zeta = \Delta+ \xi$.
We assume that there are $m$ observations that denote survival and dependent censoring times.
We write 
$l_n(\theta,H)=l_n(\theta,H)^{(1)} + l_n(\theta,H)^{(2)} $, 
where
\begin{equation}
    \begin{aligned}
    &l_n(\theta,H)^{(1)}= \frac{1}{n}\sum_{i=1}^{n} \zeta_i \left[ \log \omega \left( H(Z_i) - X_i^\top \beta ,\, H(Z_i) - W_i^\top \eta \right) 
    + \log H\{Z_i\} + \log n \right], \\
    &l_n(\theta,H)^{(2)}= \frac{1}{n} \sum_{i=1}^{n} (1 - \zeta_i) \log \tilde{C} \left( H(Z_i) - X_i^\top \beta ,\, H(Z_i) - W_i^\top \eta \right) ,
\end{aligned}
\end{equation}
We first consider $l_n(\theta,H)^{(1)}$ and have 
\begin{equation}
    \begin{aligned}
l_n(\theta,H)
&= \frac{1}{n}
\sum_{i=1}^{n} \zeta_i  [\log
\omega ( H(Z_i)-X_i^T\beta , H(Z_i)-W_i^T\eta  )
+\log H\{Z_i\} + \log (n)] \\
&=:\frac{1}{n}  \sum_{i=1}^{n} \zeta_i l_{i}(\hat{\theta}_n,\hat{H}_n), \\
\end{aligned}
\end{equation}
where $\omega (x_1,x_2)=f_{\varepsilon_T }(x_1)S_{\varepsilon_C |\varepsilon_T }(F_{\varepsilon_C }(x_2)|F_{\varepsilon_T }(x_1)) + f_{\varepsilon_C }(x_2)S_{ \varepsilon_T| \varepsilon_C }(F_{\varepsilon_T }(x_1) | F_{\varepsilon_C }(x_2))$.
For the simplicity of notation, we let 
 $ \sum_{i=1}^{m}    l_{i}(\hat{\theta}_n,\hat{H}_n)  $ 
 denote $\sum_{i=1}^{n}  \zeta_i  l_{i}(\hat{\theta}_n,\hat{H}_n) $.
Then we have 
\begin{equation} 
    \begin{aligned}
    l_n(\theta,H)
&=\frac{m}{n} \frac{1}{m} \sum_{i=1}^{m}I_{\{Z_i\le t_0\}}   l_{i}(\hat{\theta}_n,\hat{H}_n) 
 +\frac{m}{n} \frac{1}{m}\sum_{i=1}^{m}I_{\{t_0< Z_i\le t_N\}}   l_{i}(\hat{\theta}_n,\hat{H}_n)
 +\frac{m}{n} \frac{1}{m}\sum_{i=1}^{m}I_{\{Z_i > t_N\}}  l_{i}(\hat{\theta}_n,\hat{H}_n) \\
 &=:\frac{m}{n} \kappa_{1}(\hat{\theta}_n,\hat{H}_n) 
+\frac{m}{n}  \kappa_{2}(\hat{\theta}_n,\hat{H}_n) 
+\frac{m}{n}  \kappa_{3}(\hat{\theta}_n,\hat{H}_n),
\end{aligned}
\end{equation}
where $-M=t_0<t_1<...<t_N=M$ are fixed and $M>0$ is large enough.
We assume that $\lim_{n\to \infty} m/n=a$ and $0 \le a \le 1$.

Since 
$\frac{1}{n} \sum_{i=1}^{n} I_{\{Z_i>M\}}
 \to  P(V>M)$ and $\lim_{M\to \infty} P(V>M)=0$ and 
$$
\begin{aligned}
    \lim_{M\to \infty}  I_{\{Z_i<M\}} \zeta_i S_{\varepsilon_C |\varepsilon_T }(F_{\varepsilon_C }(H(Z_i)-W_i^T\eta)|F_{\varepsilon_T }(H(Z_i)-X_i^T\beta))=1, \\
    \lim_{M\to \infty}  I_{\{Z_i<M\}} \zeta_i S_{\varepsilon_T | \varepsilon_C }( F_{\varepsilon_T }(H(Z_i)-X_i^T\beta) | F_{\varepsilon_C }(H(Z_i)-W_i^T\eta))=1,
\end{aligned}
$$
we have that
\begin{equation}
\label{consistency:eta11}
    \limsup_{M\to \infty}  \limsup_{n\to \infty} \kappa_{1}(\hat{\theta}_n,\hat{H}_n)\le 0,
\end{equation}
by Lemma A.1 of \cite{Tong2019}.
Similarly,
\begin{equation}
\label{consistency:eta31}
    \limsup_{M\to \infty}  \limsup_{n\to \infty} \kappa_{3}(\hat{\theta}_n,\hat{H}_n)\le 0.
\end{equation}

Now we consider $t_k<Z_i\le t_{k+1}$ for $k=0,1,...,N-1$. Let $n_k$ denote the number of $\{i:Z_i\in(t_k, t_{k+1}]\}$.
Then when $n\to \infty$,
\begin{equation} \label{consistency:eta21}
    \begin{aligned}
&\quad \frac{1}{m} \sum_{i=1}^{m}I_{\{t_k<Z_i\le t_{k+1}\}}  l_{i}(Z_i;\hat{\beta}_n,\hat{H}_n) \\ 
&=\frac{1}{m} \sum_{i=1}^{m} I_{\{t_k<Z_i\le t_{k+1}\}} [\log  \omega (\hat{H}_n(Z_i)-X_i^T\hat{\beta}_n,
 \hat{H}_n(Z_i)-W_i^T\hat{\eta}_n)
+
\log\hat{H}_n\{Z_i\}+\log n] \\ 
&\le \frac{1}{m} \sum_{i=1}^{m} I_{\{t_k<Z_i\le t_{k+1}\}}  [\log \omega (\hat{H}_n(Z_i)-X_i^T\hat{\beta}_n,
\hat{H}_n(Z_i)-W_i^T\hat{\eta}_n)
+
\log \frac{\hat{H}_n(t_{k+1})-\hat{H}_n(t_k)}{n_k}  +\log n] \\ 
&=  E((\log \omega (H^*(V)-X^T\beta^*,
H^*(V)-W^T\eta^*)
+\log (H^*(t_{k+1})-H^*(t_k))) I_{\{t_k< V \le t_{k+1}\}} )
+p_k\log p_k +o(1),
    \end{aligned}
\end{equation}
where $p_k=P(t_k<V\le t_{k+1})$, $\lim_{n\to \infty} (\hat{\theta}_n,\hat{H}_n)= (\theta^*,H^*) $.

Let $\tilde{H}_n (t_k)=H_0(t_k) $ for $k=0,1,...,N$, and
$$
\begin{aligned}
    &\Delta \tilde{H}_n(Y_i)=\frac{\hat{H}_n(t_{k+1})-\hat{H}_n(t_{k})}{n_k}
\qquad
\text{for} \, V_i \in (t_k,t_{k+1}), k=0,1,...,N-1,\\
&\Delta \tilde{H}_n(Y_i)=\frac{1}{n}
\qquad
\text{for} \,V_i<-M \, \text{or} \, V_i>M.
\end{aligned}
$$
Considering $(\beta_0,H_0)$, similarly to \eqref{consistency:eta11} and \eqref{consistency:eta31}, we have,
\begin{equation}\label{consistency:eta12}
    \limsup_{M\to \infty}  \limsup_{n\to \infty} \kappa_{1}(\beta_0,\tilde{H}_n)\le 0 
\end{equation}
and
\begin{equation}\label{consistency:eta32}
    \limsup_{M\to \infty}  \limsup_{n\to \infty} \kappa_{3}(\beta_0,\tilde{H}_n)\le 0 .
\end{equation}
For $k=0,1,...,N-1$,
\begin{equation} \label{consistency:eta22}
    \begin{aligned}
& \frac{1}{m} \sum_{i=1}^{m}I_{\{t_k<Z_i\le t_{k+1}\}}  L_{i1}(Z_i;\beta_0,\tilde{H}_n) \\
&=\frac{1}{m} \sum_{i=1}^{m} I_{\{t_k<Z_i\le t_{k+1}\}}  [\log \omega(\tilde{H}_n(Z_i)-X_i^T\beta_0,
\tilde{H}_n(Z_i)-W_i^T\hat{\eta}_n)
+
\log H_0\{Z_i\}+\log n] \\  
&=  E((\log \omega (H_0(V)-X^T\beta_0, H_0(V)-W^T \eta_0) )
I_{\{t_k< V \le t_{k+1}\}} )
+p_k\log (H_0(t_{k+1})-H_0(t_k)) +p_k\log p_k +r_{n,k},
    \end{aligned}
\end{equation}
where $r_{n,k}$ is such that
$\limsup_{n\to \infty} \sum_{k=0}^{N-1} |r_{n,k}|=0$,
as $\lim_{N\to \infty} \max_{-\infty\le k \le N-1} |t_{k+1}-t_k|\to 0$.

Then by (\ref{consistency:eta21}) and (\ref{consistency:eta22}), we have
$$
 \begin{aligned}
& \limsup_{n\to \infty}  (\kappa_{2}(\hat{\beta}_n,\hat{H}_n)-\kappa_{2}(\beta_0, \tilde{H}_n)) \\
&\le\sum_{k=0}^{N-1}   E(I_{\{t_k< V \le t_{k+1}\}} \times (\log \omega(H^*(V)-X^T\beta^*,H^*(V)-W^T\eta^*)
+\log (H^*(t_{k+1})-H^*(t_k)))) \\
&\quad - \sum_{k=0}^{N-1}  E(I_{\{t_k<V\le t_{k+1}\}} \times 
(\log \omega(H_0(V)-X^T\beta_0,H_0(V)-W^T\eta_0) 
+\log (H_0(t_{k+1})-H_0(t_k)))) +\limsup_{n\to \infty}\sum_{k=0}^{N-1}|r_{n,k}|  \\
&=J_1(\theta^*,H^*,\theta_0,H_0)
+\limsup_{n\to \infty}\sum_{k=0}^{N-1}|r_{n,k}|, \\
 \end{aligned}
$$
where $\alpha_N=P(t_0<V \le t_N)$.

Then we have 
$$
J_1(\theta^*,H^*,\theta_0,H_0)=
E( I_{\{t_0<V\le t_{N}\}}          
\log \sum_{k=0}^{N-1}I_{\{t_k<V\le t_{k+1}\}}   \frac{\omega(H^*(V)-X^T\beta^*,H^*(V)-W^T\eta^*)
(H^*(t_{k+1})-H^*(t_k))}{\omega(H_0(V)-X^T\beta_0,H_0(V)-W^T\eta_0)
(H_0(t_{k+1})-H_0(t_k))}),
$$
and
$$
J_1(\theta^*,H^*,\theta_0,H_0)
\le 
\alpha_N\log E \sum_{k=0}^{N-1}I_{\{t_k<V\le t_{k+1}\}} \frac{\omega (H^*(V)-X^T\beta^*,H^*(V)-W^T\eta^*)
(H^*(t_{k+1})-H^*(t_k))}{\omega (H_0(V)-X^T\beta_0,H_0(V)-W^T\eta_0)
(H_0(t_{k+1})-H_0(t_k))}
- \alpha_N \log \alpha_N.
$$

Let $-M=t_0<t_1<...<t_N=M$ such that
$\lim_{N\to \infty} \max_{-\infty\le k \le N-1} |t_{k+1}-t_k|\to 0 $.
Then using Lemma A.3 of \cite{Tong2019}, 
\begin{equation} \label{consistency:alpha}
\begin{aligned}
&\log E \sum_{k=0}^{N-1}I_{\{t_k<V\le t_{k+1}\}} 
\frac{\omega(H^*(V)-X^T\beta^*,H^*(V)-W^T\eta^*)(H^*(t_{k+1})-H^*(t_k))}
{\omega(H_0(V)-X^T\beta_0, H_0(V)-W^T\eta_0)(H_0(t_{k+1})-H_0(t_k))}\\
&=\log \int_{\mathcal{X} } \int_{\mathcal{W} } \sum_{k=0}^{N-1}
\zeta_k
\int_{t_k}^{t_{k+1}} \frac{\omega(H^*(t)-x^T\beta^*,H^*(t)-w^T\eta^*)(H^*(t_{k+1})-H^*(t_k))}
{\omega(H_0(t)-x^T\beta_0,H_0(t)-w^T\eta_0)(H_0(t_{k+1})-H_0(t_k))} \\
& \qquad\times \omega (H_0(t)-x^T\beta_0,H_0(t)-w^T\eta_0)h_0(t)f_{X,W}(x,w)dtdwdx\\
&\to \log \int_{\mathcal{X} } \int_{\mathcal{W} } \int_{-M}^{M} a \omega (H^*(t)-x^T\beta^*,H^*(t)-w^T\eta^*) f_{X,W}(x,w)
dH^*(t) dw dx
\; as \; N\to \infty ,
 \end{aligned}
\end{equation} 
which converges to 0 as $M\to \infty$.
$\mathcal{X}$ and $\mathcal{W}$ represent the compact sets of the variables $X$ and $W$ respectively.
Thus we have
$\limsup_{M\to \infty} \limsup_{N\to \infty} \limsup_{n\to \infty}
(l_n(\hat{\theta}_n,\hat{H}_n)^{(1)}-l_n(\theta_0,\tilde{H}_n)^{(1)})
=0$.

Besides, for $l_n(\theta, H)^{(2)}$,  by L'Hopital rules of limits, we have
$$
\lim_{t \to \infty}
\frac{ \tilde{C} \left( H^*(t) - X_i^\top \beta^* ,\, H^*(t) - W_i^\top \eta^* \right)}
{\tilde{C} \left( H_0(t) - X_i^\top \beta_0 ,\, H_0(t) - W_i^\top \eta_0 \right)}
= \lim_{t \to \infty} \frac{ \omega \left( H^*(t) - X_i^\top \beta^* ,\, H^*(t) - W_i^\top \eta^* \right)}
{\omega \left( H_0(t) - X_i^\top \beta_0 ,\, H_0(t) - W_i^\top \eta_0 \right)}, $$
Since $ 0 \le \tilde{C} \left( H(t) - X_i^\top \beta ,\, H(t) - W_i^\top \eta \right) \le  1$, we obtain that
$\limsup_{M\to \infty} \limsup_{N\to \infty} \limsup_{n\to \infty}
(l_n(\hat{\theta}_n,\hat{H}_n)^{(2)}-l_n(\theta_0,\tilde{H}_n)^{(2)})
=0$.
Therefore, we have that
\begin{equation}
\limsup_{M\to \infty} \limsup_{N\to \infty} \limsup_{n\to \infty}
(l_n(\hat{\theta}_n,\hat{H}_n)-l_n(\theta_0,\tilde{H}_n))
=0.
\label{consistency:like}
\end{equation}

(Step 2)
We aim to prove $ H_0$ is absolutely continuous with respect to $H^*$ by considering the following equation

\begin{equation}\label{consistency:reconsi}
    \begin{aligned}
  J_2(\theta^*,H^*,\theta_0,H_0) 
  & = \int_{\mathcal{X} } \int_{\mathcal{W} } \sum_{k=0}^{N-1} \zeta_k \int_{t_k}^{t_{k+1}} \log \frac{\omega_(H^*(t)-x^T\beta^*,H^*(t)-w^T\eta^*)(H^*(t_{k+1}) - H^*(t_k))}
 {\omega_(H_0(t)-x^T\beta_0,H_0(t)-w^T\eta_0)(H_0(t_{k+1}) - H_0(t_k))} \\
 & \qquad \times \omega_(H_0(t)-x^T\beta_0,H_0(t)-w^T\eta_0)
 f_{X,W}(x,w)
 \; dt dw dx. 
\end{aligned}
\end{equation}
It is easy to prove that
$$
 J_1(\theta^*,H^*,\theta_0,H_0)= J_2(\theta^*,H^*,\theta_0,H_0).
$$
By Step 1, we obtain 
$$
\begin{aligned}
&\limsup_{M\to \infty} \limsup_{N\to \infty}|  J_2(\theta^*,H^*,\theta_0,H_0)|=0.
 \end{aligned}
 $$
Given that $\omega$ and $h_0$ each have a positive lower bound and a finite upper bound on any finite interval, the above equation shows that for any
 $M\in (0,\infty)$, we have
 $$
 \sup_{N\ge 1} | \sum_{k=0}^{N-1} (H_0(t_{k+1})-H_0(t_k))
\log \frac{(H_0(t_{k+1})-H_0(t_k))}{(H^*(t_{k+1})-H^*(t_k))}| <\infty . $$
Then, using  $xlogx=\sup_{\alpha\in \mathcal{R}}\{ \alpha x - e ^{\alpha-1} \}$, we obtain that for any $r>0$,
$$
\begin{aligned}
&| \sum_{k=0}^{N-1} (H_0(t_{k+1})-H_0(t_k))
\log \frac{(H_0(t_{k+1})-H_0(t_k))}{(H^*(t_{k+1})-H^*(t_k))}| \\
&= | \sum_{k=0}^{N-1} \sup_{\alpha\in \mathcal{R}}\{\alpha (H_0(t_{k+1})-H_0(t_k)) -  (H^*(t_{k+1})-H^*(t_k)) e ^{\alpha-1} \} | \\   
&\ge r \sum_{k=0}^{N-1} |H_0(t_{k+1})-H_0(t_k) | - e ^{r-1} \sum_{k=0}^{N-1}|H^*(t_{k+1})-H^*(t_k) |.
\end{aligned}
$$
If 
$\lim_{N\to \infty} \sum_{k=0}^{N-1}|H^*(t_{k+1})-H^*(t_k) |=0$,
we obtain that
$$
\begin{aligned}
\limsup_{N\to \infty}  \sum_{k=0}^{N-1} |H_0(t_{k+1})-H_0(t_k) |
&\le \frac{1}{r} \sup_{N\ge 1} | \sum_{k=0}^{N-1} (H_0(t_{k+1})-H_0(t_k))
\log \frac{(H_0(t_{k+1})-H_0(t_k))}{(H^*(t_{k+1})-H^*(t_k))}| \\
&\to 0 , \; as \; r\to \infty.
\end{aligned}
$$
Thus, $H_0$ is absolutely continuous with respect to $H^*$. 
For any $M>0$ , applying (\ref{consistency:reconsi}) , we can derive that
\begin{equation}
\begin{aligned}
\limsup_{N\to \infty}  J_2(\theta^*,H^*,\theta_0,H_0) 
&= \int_{\mathcal{X} } \int_{\mathcal{W} } \int_{-M}^{M} a \log \frac{f^*_{V|x,w}(t)}{f^0_{V|x,w}(t)}f^0_{V|x,w}(t) f_{X,W}(x,w) dt dw dx \\
&=- a E( \int_{-M}^{M}  \log \frac{f^0_{V|X,W}(t)}{f^*_{V|X,W}(t)} f^0_{V|X,W}(t) dt ),
\end{aligned}
\label{proof:star3}
\end{equation}
where 
$$
\begin{aligned}
f^*_{V|x,w}(t)=\omega(H^*(t)-x^T \beta^*, H^*(t)-w^T\eta^*) h^*(t),\\
f^0_{V|x,w}(t)=\omega(H_0(t)-x^T \beta_0, H_0(t)-w^T\eta_0) h_0(t),
\end{aligned}
$$
and $h^*(t)=dH^*(t)/dt$, $h_0(t)=dH_0 (t)/dt$.

(Step3)
We define $ F^*_{V|X,W}(t)=\int_{-\infty}^{t} f^*_{V|X,W}(v) dv   $, $ F^0_{V|X,W}(t)=\int_{-\infty}^{t} f^0_{V|X,W}(v) dv $.
By Theorem \ref{theorem:identification},  to obtain the consistency of $(\hat{\theta}_n,\hat{H}_n)$, it is sufficient for us to prove $(\theta^*,H^*)=(\theta_0,H_0)$  by proving $f^*_{V|X,W}(t)=f^0_{V|X,W}(t)$.

We start from noting that $|\log x |\le 1/x$ for all $0<x\le 1$. We have that
$(\log \frac{f^0_{V|X,W}(t)}{f^*_{V|X,W}(t)})^- \le ( \frac{f^0_{V|X,W}(t)}{f^*_{V|X,W}(t)})^{-1}$.
Thus 
$$
\int_{-M}^{M} (\log \frac{f^0_{V|X,W}(t)}{f^*_{V|X,W}(t)})^{-} f^0_{V|X,W}(t) dt 
\le F^*_{V|X,W}(t) \le 1.
$$
According to the monotone convergence theorem, we have
$$
\lim_{M\to \infty }
E(\int_{-M}^{M} (\log \frac{f^0_{V|X,W}(t)}{f^*_{V|X,W}(t)})^+ f^0_{V|X,W}(t) dt)
=E(\int_{-\infty}^{\infty} (\log \frac{f^0_{V|X,W}(t)}{f^*_{V|X,W}(t)})^+ f^0_{V|X,W}(t) dt ),
$$
and
$$
\lim_{M\to \infty }
E(\int_{-M}^{M} (\log \frac{f^0_{V|x,w}(t)}{f^*_{V|x,w}(t)})^- f^0_{V|X,W}(t) dt)
=E(\int_{-\infty}^{\infty} (\log \frac{f^0_{V|X,W}(t)}{f^*_{V|X,W}(t)})^- f^0_{V|X,W}(t) dt )
<\infty.
$$
Thus, 
$$
\begin{aligned}
-E(\int_{-\infty}^{\infty} (\log \frac{f^0_{V|X,W}(t)}{f^*_{V|X,W}(t)}) f^0_{V|X,W}(t) dt )
=E(\int_{-\infty}^{\infty} (\log \frac{f^*_{V|X,W}(t)}{f^0_{V|X,W}(t)}) f^0_{V|X,W}(t) dt )
\le 0.
\end{aligned}
$$
By (\ref{consistency:like}),  (\ref{consistency:reconsi}) and (\ref{proof:star3}), we obtain
$$
\begin{aligned}
 -\lim_{M\to \infty } a E(\int_{-M}^{M} (\log \frac{f^0_{V|X,W}(t)}{f^*_{V|X,W}(t)}) f^0_{V|X,W}(t) dt )
 =\limsup_{M \to \infty}\limsup_{N \to \infty} J_2(\theta^*,H^*,\theta_0,H_0)
 \ge 0.
\end{aligned}
$$
The above equation implies that
$$
\begin{aligned}
0 &\le -\lim_{M\to \infty } E(\int_{-M}^{M} (\log \frac{f^0_{V|X,W}(t)}{f^*_{V|X,W}(t)}) f^0_{V|X,W}(t) dt )
=- E(\int_{-\infty}^{\infty} (\log \frac{f^0_{V|X,W}(t)}{f^*_{V|X,W}(t)}) f^0_{V|X,W}(t) dt )
\le 0, 
\end{aligned}
$$
which demonstrate that
$f^*_{V|X,W}(t)=f^0_{V|X,W}(t)$.
Hence, for any $t \in \mathcal{R}$,
$$
 \omega(H^*(t)-X^T \beta^*, H^*(t)-W^T\eta^*) h^*(t)
= \omega(H_0(t)-X^T \beta_0, H_0(t)-W^T\eta_0) h_0(t).
$$
Since we have  established  the identifiability of the model,  it follows that
$\theta^*=\theta_0$, $H^*=H_0$. Thus we complete the proof.

$\hfill\blacksquare$

\section{Proof of Theorem \ref{theorem:normal} }\label{sec:mynorm2}

To establish the asymptotic normality of the estimators, we first define $\{T_n\}$ such that $P(|Z| \ge T_n) = o(n^{-3/4})$ and $E\left[\dot{\psi}(H_0(Z)-X^T \beta_0, H_0(Z)-W^T \eta_0)H_0(Z)I(|Z| \ge T_n)\right] = o_p(n^{-3/4})$, where $\dot{\psi}(x_1,x_2)$ is given by $\partial \psi(x_1,x_2) /\partial x_1 + \partial \psi(x_1,x_2) /\partial x_2$.
Additionally, for any $T > 0$ and $s=(\theta, H)$, the pseudo distance between  $s_1$ and $s_2$ is defined as
$$
d_T(s_1,s_2)=\sup_{|t|\le T }
|H_1(t)-H_2(t)| + ||\theta_1 - \theta_2 ||.
$$
From Lemma \ref{lemma:distance}, we can obtain that $d_{T_n}(\hat{s}_n,s_0)=o_p(1)$.
For clarity of proof,  we use the function \(\psi (\tilde{H}_n(z)-X^T\hat{\beta}_n,\tilde{H}_n(z)-W^T\hat{\eta}_n)\) as an abbreviation of \eqref{eq:alg_psi}.

Then we follow the  steps similar to the proof of Theorem 2.3 of \cite{Tong2019}.
We start by  defining  the truncated function of  $\hat{H}_n$ at $T_n$ is that
 $$
\tilde{H}_n(t)= \begin{cases}   
\hat{H}_n(-T_n),\quad &t\le -T_n, \\
\hat{H}_n(t),\quad &|t|\leq T_n ,\\
\hat{H}_n(T_n),\quad &t\ge T_n. \\
\end{cases}
 $$

For simplicity of notation, we will denote $\tilde{\theta}_n = \hat{\theta}_n$ and $\tilde{s}_n = (\tilde{\theta}_n, \tilde{H}_n)$ in the following proof.
Given that the score function for $\theta$  at $\tilde{s}_n$ is zero, we have 
$P_n U(Z,\Delta,\xi,\hat{\theta}_n,\hat{H}_n)=0$.
Consequently, by   the proof of Lemma A.4 of \cite{Tong2019},  we have
$$
\begin{aligned}
o_p(n^{-1/2})
&=P_n I(|Z|\le T_n) U(Z,X,W,\Delta,\xi,\hat{\theta}_n,\hat{H}_n) \\
&=P_n U(Z,X,W,\Delta,\xi,\theta_0,H_0) + 
P_n \{I(|Y|\le T_n)
\frac{\partial U(Z,X,W,\Delta,\xi,\theta_0,H_0)}{\partial H}
[\hat{H}_n(Y)-H_0(Y)]    \\
&\quad +P_n \{I(|Y|\le T_n) \frac{\partial U(Z,X,W,\Delta,\xi,\theta_0,H_0)}{\partial \theta}(\hat{\theta}_n-\theta_0)
+ o_p(d_{T_n}(\hat{s}_n,s_0)).
\end{aligned}
$$
By substituting $\hat{H}_n$ with $\tilde{H}_n$, the above equation remains valid.
Since 
$$
\begin{aligned}
    U(Z,\Delta,\xi,\theta,H)
&=\sum_{i=1}^{n}\partial L_i(\theta,H)/\partial \theta\\
&=\sum_{i=1}^{n} \Delta_i(1-\xi_i)\frac{\partial  L_{i1}(\theta,H)}{\partial \theta} 
+ \sum_{i=1}^{n} (1-\Delta_i)\xi_i  \frac{\partial L_{i2}(\theta,H)}{\partial \theta} 
+ \sum_{i=1}^{n} (1-\Delta_i)(1-\xi_i) \frac{\partial L_{i3}(\theta,H)}{\partial \theta} \\
&=:U_1 + U_2 + U_3 
\end{aligned}
$$
and in addition,
$$
\begin{aligned}
&U'_{\theta}(Z,\Delta,\xi,\theta_0,H_0)
=\frac{\partial U(Z,X,W,\Delta,\xi,\theta_0,H_0)}{\partial \theta} , \\
&  U'_{H}(Z,\Delta,\xi,\theta_0,H_0)
=\frac{\partial U(Z,X,W,\Delta,\xi,\theta_0,H_0)}{\partial H}
=:U'_{1H}+U'_{2H}+U'_{3H}. 
\end{aligned}
$$
We obtain that
\begin{equation}\label{norm:A12}
    E  U'_{\theta}(Z,\Delta,\xi,\theta_0,H_0) (\hat{\theta}_n-\theta_0) + \mathcal{A} [\tilde{H}_n-H_0]
= - P_n  U(Z,\Delta,\xi,\theta_0,H_0) 
+o_p(d_{T_n}(\hat{s}_n,s_0)) +o_p(n^{-1/2}),
\end{equation}
where $\mathcal{A}$ is the functional operator such that 
$$
\begin{aligned}
\mathcal{A}[H]
&=
\int_{-\infty}^{\infty} E_{X,W}[U'_{1H}f_{Z,\Delta,\xi|X,W}(t,1,0) ]H(t) \mathrm{d}H_0(t) +
\int_{-\infty}^{\infty} E_{X,W}[U'_{2H}f_{Z,\Delta,\xi|X,W}(t,0,1) ]H(t) \mathrm{d}H_0(t) \\
&\quad + \int_{-\infty}^{\infty} E_{X,W}[U'_{3H}f_{Z,\Delta,\xi|X,W}(t,0,0) ]H(t) \mathrm{d}H_0(t) \\
&=:\mathcal{A}_1 [H] + \mathcal{A}_2 [H] + \mathcal{A}_3 [H], 
\end{aligned}
$$
and $H\in \{ H:|H(t)|\le 3 H_0(t) \}$.
Next, we analyze the asymptotic representation of
$\tilde{H}_n(t)$. 
To begin with, 
\begin{equation}
    \tilde{H}_n(t)=\int_{0}^{t} 
    \frac{
    P_n dN_+(z) }
    {P_n  Y(z) \psi (\tilde{H}_n(Z)-X^T\hat{\beta}_n,\tilde{H}_n(Z)-W^T\hat{\eta}_n) }, 
    \label{eq:asym_repre}
\end{equation}
where $N_+(z)=\delta I(0<Z\le z)$, $\delta=I(V\le A)$ , 
$Y(z)=I(Z>z)$.
We define the martingale process $M_+(t)$ as
$$
dM_+(z)=dN_+^*(z)-
Y_i(z) \lambda( H(z)-X^T\beta ,  H(z)-W^T\eta).
$$
where
$$\lambda(z)
=\frac{S^{(1)}( H(z)-X^T\beta ,  H(z)-W^T\eta)
+S^{(2)}( H(z)-X^T\beta,  H(z)-W^T\eta)}
{S( H(z)-X^T\beta ,  H(z)-W^T\eta) },
$$
with 
$S(z_1,z_2)=\tilde{\mathcal{C}}(F_{\varepsilon _T}(z_1),F_{\varepsilon _C}(z_2))$, 
$S^{(1)}(z_1,z_2)=
dS(z_1,z_2) / d z_1$
and 
$S^{(2)}(z_1,z_2)=
d S(z_1,z_2) / d z_1$.

To simplify, we write 
\begin{equation}
\begin{aligned}
    & \hat{\varepsilon}=(\hat{\varepsilon} _T,\hat{\varepsilon} _C)
    = (\hat{H}_n(Z)-X^T\hat{\beta}_n, \hat{H}_n(Z)-W^T\hat{\eta}_n), \\
    & \varepsilon(t)=(\varepsilon_T(t), \varepsilon_C(t))
    =(H_0(t)-X^T \beta_0, H_0(t)-W^T \eta_0).
\end{aligned}
\label{eq:note_abb}
\end{equation}
Therefore, based on the proof of  Lemma A.4 of \cite{Tong2019} and lemma \ref{lemma:distance},  we find that  for any $z\in[0,T_n]$,
\begin{equation}
    \begin{aligned}
\tilde{H}_n(t)-H_0(t)
&=\int_{0}^{t} \frac{P_n dN_+(z)}
{P_n \{ Y(z) \psi (\tilde{H}_n(Z)-X^T\hat{\beta}_n,\tilde{H}_n(Z)-W^T\hat{\eta}_n)\}} - H_0(t)\\
&=\int_{0}^{t} \frac{P_n dM_+(z)}{P_n \{ Y(z) \psi (\tilde{H}_n(Z)-X^T\hat{\beta}_n,\tilde{H}_n(Z)-W^T\hat{\eta}_n)\} } 
+ \int_{0}^{t} \frac{P_n I(Y\ge z)[\lambda(\varepsilon (z))
- \psi(\varepsilon )]dH_0(z)}
{P_n \{ Y(s) \psi (\tilde{H}_n(Z)-X^T\hat{\beta}_n,\tilde{H}_n(Z)-W^T\hat{\eta}_n)\} }   \\
&\quad - \int_{0}^{t} \frac{P_n  Y(z) [ \psi (\tilde{H}_n(Z)-X^T\hat{\beta}_n,\tilde{H}_n(Z)-W^T\hat{\eta}_n)
-\psi(H_0(Z)-X^T \beta_0,H_0(Z)-W^T \eta_0)] dH_0(z)}
{P_n \{ Y(z) \psi(\tilde{H}_n(Z)-X^T\hat{\beta}_n,\tilde{H}_n(Z)-W^T\hat{\eta}_n)\}  }.   \\
\end{aligned}
\label{eq:HnH0minus}
\end{equation}
The last term of equation \eqref{eq:HnH0minus} satisfies
$$
\begin{aligned}
&\int_{0}^{t} \frac{P_n  Y(z) [ \psi(\tilde{H}_n(Z)-X^T\hat{\beta}_n,\tilde{H}_n(Z)-W^T\hat{\eta}_n)
- \psi(H_0(Z)-X^T \beta_0,H_0(Z)-W^T \eta_0)] dH_0(z)}
{P_n  Y(z) \psi (\tilde{H}_n(Z)-X^T\hat{\beta}_n,\tilde{H}_n(Z)-W^T\hat{\eta}_n) } \\ 
&=\int_{0}^{t} \frac{P_n Y(z) 
[\dot{\psi} (\varepsilon )(\hat{H}_n(Z)-H_0(Z))
+\psi'_{\omega} \omega'_{\theta}(\hat{\theta}_n-\theta_0)
]}
{E Y(s) \psi(\varepsilon_T,\varepsilon_C)}dH_0(s) + o_p(d_{T_n}(\tilde{s}_n, s_0) ) + o_p(n^{-1/2}) , \\ 
\end{aligned}
$$
where $\dot{\psi} (x_1,x_2)=\partial \psi(x_1,x_2) /\partial x_1 + \partial \psi(x_1,x_2) /\partial x_2 $, 
$\psi'_{\omega}=\partial \psi(\varepsilon_T,\varepsilon_C) / \partial \omega  $,
$\omega'_{\theta}=\partial \omega(\varepsilon_T,\varepsilon_C)/ \partial \theta  $.
Besides, 
define 
$$
\begin{aligned}
& \mathcal{J}_1^+(t)=\int_{0}^{t} \frac{ dM_+(z)}{E Y(z) \psi(\varepsilon) } 
+ \int_{0}^{t} \frac{ Y(z)[\lambda(\varepsilon (z)-\psi(\varepsilon)  ]]dH_0(z)}
{E Y(z) \psi(\varepsilon) } \\
&\mathcal{J}_2^+(t)=
\int_{0}^{t} \frac{E Y(z) \psi'_{\omega} \omega'_{\theta}}
{E Y(z) \psi(\varepsilon)} dH_0(z),
\end{aligned}
$$
and the functional operator
$$
\begin{aligned}
\mathcal{G}_+[H](t)
&=\int_{0}^{\infty} H(z)E_{X,W}[\dot{\psi} (\varepsilon (z))f(\varepsilon (z)) ]dH_0(z)
\int_{0}^{z} \frac{I(u\le t)}{EI(Y\ge z) \psi(\varepsilon )} du. \\
\end{aligned}
$$
Then we obtain that
$$
\tilde{H}_n(t)-H_0(t)=P_n \mathcal{J}_1^+(t)-\mathcal{J}_2^+(t)'(\hat{\theta}_n-\theta_0)-\mathcal{G}_
+[\tilde{H}_n-H_0](t)+o_p(d_{T_n}(\tilde{s}_n,s_0)+n^{-1/2}).
$$
Similarly, define $U_1^-(t)$, $U_2^-(t)$ and  $\mathcal{G}_-$. Then we obtain that for $t\in [-T_n,0]$,
$$
\tilde{H}_n(t)-H_0(t)=P_n \mathcal{J}_1^-(t) - \mathcal{J}_2^-(t)'(\hat{\theta}_n-\theta_0)-\mathcal{G}_
-[\tilde{H}_n-H_0](t)+o_p(d_{T_n}(\tilde{s}_n,s_0)+n^{-1/2}).
$$
Therefore, it follows that for $t\in [-T_n,T_n]$,
$$
\tilde{H}_n(t)-H_0(t)=P_n \mathcal{J}_1(t) - \mathcal{J}_2(t)'(\hat{\theta}_n-\theta_0)-\mathcal{G}[\tilde{H}_n-H_0](t)+o_p(d_{T_n}(\tilde{s}_n,s_0)+n^{-1/2}),
$$
where $\mathcal{J}_1(t)=\mathcal{J}_1^-(t)+ \mathcal{J}_1^+(t)$ and so are $U_2(t)$ and $\mathcal{G}$.
Then we have
$$
\tilde{H}_n(\cdot)-H_0(\cdot)=
P_n(I+\mathcal{G})^{-1}[\mathcal{J}_1](\cdot)-(I+\mathcal{G})^{-1}[\mathcal{J}_2](\cdot)'(\hat{\theta}_n-\theta_0)
+o_p(d_{T_n}(\tilde{s}_n,s_0)+n^{-1/2}).
$$
Thus with (\ref{norm:A12}), we obtain that 
$$
\begin{aligned}
&[E\{ U'_{\theta}  \} - \mathcal{A} [(I+\mathcal{G})^{-1}[\mathcal{J}_2](Z)]'](\hat{\theta}_n-\theta_0)\\ 
&=-P_n [ U +[(I+\mathcal{G})^{-1}[\mathcal{J}_1](Z)]
+o_p(d_{T_n}(\hat{s}_n,s_0)) +o_p(n^{-1/2})
\end{aligned}
$$
and thus that
$$
\begin{aligned}
\sqrt{n}(\hat{\theta}_n-\theta_0)
&=-[E\{ U'_{\theta}  \} - \mathcal{A} [(I+\mathcal{G})^{-1}[\mathcal{J}_2](Z)]']^{-1} \sqrt{n} P_n \times\\
&\quad  [U +[(I+\mathcal{G})^{-1}[\mathcal{J}_1](Z)]
+o_p(\sqrt{n}d_{T_n}(\hat{\theta}_n,\theta_0)) +o_p(1)
\end{aligned}
$$
Therefore, similar to Theorem 2.3 of \cite{Tong2019},  $|| \tilde{H}_n - H_0 ||=O_p(|\tilde{\theta}_n -\theta_0|+n^{-1/2} )  $ implies that
$$
\sqrt{n}(\hat{\theta}_n-\theta_0) \to N(0,\Sigma),
$$
where 
$$
\Sigma=A_{\theta}^{-1} \Sigma_{\theta} A_{\theta}^{{-1}^T},
$$
with
\begin{equation}
    A_\theta=E\{ U'_{\theta}  \} - \mathcal{A} [(I+\mathcal{G})^{-1}[\mathcal{J}_2](Z)]',
\label{Atheta}
\end{equation}
\begin{equation}
    \Sigma_\theta=cov[ U +[(I+\mathcal{G})^{-1}[\mathcal{J}_1](Z) ].
\label{Sigmatheta}
\end{equation}
Then we complete the proof.

$\hfill\blacksquare$ 

\section{Lemma }\label{sec:Lemma}


\begin{lemma}
Assume that Conditions (C\ref{C1})-(C\ref{C7}) given above hold and that Asssumptions (A\ref{A1})-(A\ref{A8}) are satisfied. 
Suppose that
$$\lim_{z \to a}
\frac{H_1(z)}{H_2(z)}=\frac{1}{c}
,\quad
\lim_{z \to a}
\frac{g_{\varepsilon _T,\lambda_{T,1}}(H_1(z)-x^T\beta_1)}{g_{\varepsilon _T,\lambda_{T,2}}(H_2(z)-x^T\beta_2)}=c, \quad 
\lim_{z \to a}
\frac{f_{\varepsilon _T,\lambda_{T,1}}(H_1(z)-x^T\beta_1)}{f_{\varepsilon _T,\lambda_{T,2}}(H_2(z)-x^T\beta_2)}=c, 
$$
where $0<c<\infty $, $a=-\infty$ or $\infty$.
Then if $c=1$, we have
$\lim_{z \to a} \{ H_1(z) - H_2(z) \}=0$, $\beta_1=\beta_2$.
\label{lemma_H1H2}
\end{lemma}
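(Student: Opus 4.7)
The plan is to deduce $\beta_1=\beta_2$ first from the $f$-ratio condition by exploiting how both sides depend on the covariate $x$, and then to conclude $\lim_{z\to a}(H_1(z)-H_2(z))=0$ by combining the resulting simplified identity with Assumption (A\ref{A7}) and the monotonicity of $g$.

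For the first step, I would fix two values $x_1,x_2\in\mathcal{X}$ and subtract the logarithms of the $f$-ratio condition at $x_1$ and at $x_2$. A mean value theorem applied to each of the two differences (using $g=\partial\log f/\partial t$) reduces the resulting convergence-to-zero statement to
$$
g_{\varepsilon_T,\lambda_{T,1}}(\xi_1(z))(x_2-x_1)^T\beta_1-g_{\varepsilon_T,\lambda_{T,2}}(\xi_2(z))(x_2-x_1)^T\beta_2\to 0,
$$
where the intermediate points satisfy $\xi_j(z)\to a$, because $H_1/H_2\to 1$ together with the normalization $H_j(0)=0$ force both $H_1(z)$ and $H_2(z)$ to share the same endpoint at $a$. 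By Assumption (A\ref{A5+}) the limits $g_{j,a}:=\lim_{t\to a}g_{\varepsilon_T,\lambda_{T,j}}(t)$ exist and are nonzero (since $g$ is strictly decreasing and vanishes at $0$), and the hypothesized $g$-ratio limit equal to $1$ then yields $g_{1,a}=g_{2,a}=:g_a$. Passing to the limit gives $g_a(x_2-x_1)^T(\beta_1-\beta_2)=0$ for every $x_1,x_2$; the full-rank condition (A\ref{A4}) on $\text{var}(X)$ therefore forces $\beta_1=\beta_2$.

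For the second step, set $\beta:=\beta_1=\beta_2$. The $f$-ratio condition becomes $\log f_{\varepsilon_T,\lambda_{T,1}}(H_1(z)-x^T\beta)-\log f_{\varepsilon_T,\lambda_{T,2}}(H_2(z)-x^T\beta)\to 0$, and splitting off a common-argument piece by a mean value expansion in the first argument yields
$$
g_{\varepsilon_T,\lambda_{T,1}}(\eta(z))\bigl(H_1(z)-H_2(z)\bigr)+\bigl[\log f_{\varepsilon_T,\lambda_{T,1}}(H_2(z)-x^T\beta)-\log f_{\varepsilon_T,\lambda_{T,2}}(H_2(z)-x^T\beta)\bigr]\to 0,
$$
with $\eta(z)\to a$. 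The shifted-argument $g$-ratio condition, combined with the strict monotonicity of $g$ from (A\ref{A5+}) and $H_1/H_2\to 1$, propagates to the common-argument limit $g_{\varepsilon_T,\lambda_{T,1}}(t)/g_{\varepsilon_T,\lambda_{T,2}}(t)\to 1$ as $t\to a$. Assumption (A\ref{A7}) then gives either $\lambda_{T,1}=\lambda_{T,2}$ or a common-argument $f$-ratio tending to $1$; in either case the bracketed residual tends to $0$. Dividing through by $g_{\varepsilon_T,\lambda_{T,1}}(\eta(z))\to g_a\neq 0$ forces $H_1(z)-H_2(z)\to 0$, completing the proof.

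The main obstacle I expect is the propagation step in the preceding paragraph, namely upgrading the shifted-argument $g$-ratio convergence to the common-argument version without first having a bound on $H_1-H_2$. This will be handled by using $H_1/H_2\to 1$ to control the shift as $z\to a$: since $|H_1-H_2|/|H_2|\to 0$, the shift is small relative to the scale on which $g$ varies once $g$ possesses a finite limit at $a$. A secondary difficulty is the degenerate case $g_a\in\{\pm\infty\}$, where the mean value arguments require rescaling by an appropriate rate function, analogous to the repeated use of L'Hopital's rule in Step 1 of Theorem \ref{theorem:identification}. Finally, the two regimes $a=-\infty$ and $a=+\infty$ will need to be tracked separately because $g_a$ has opposite sign and $H_j(z)$ diverges in opposite directions.
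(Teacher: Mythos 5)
Your proposal is correct and follows essentially the same route as the paper: the same decomposition into a same-argument, different-$\lambda$ piece (killed via Assumption (A\ref{A7}) after converting the shifted-argument $g$-ratio to the common-argument one using the finite limits from (A\ref{A5+})) and a same-$\lambda$, different-argument piece handled by the mean value theorem with the nonzero limit $g_a$, finishing with the full-rank condition (A\ref{A4}). The only cosmetic difference is that you first difference over two covariate values to isolate $\beta_1=\beta_2$ and then treat $H_1-H_2$, whereas the paper extracts both conclusions at once from the single identity $\lim_{z\to a}\{H_1(z)-H_2(z)+x^T(\beta_2-\beta_1)\}=0$ via the variance argument.
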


\textbf{Proof}:
Without loss of generity, we assume that $a=-\infty$.
Since $c=1$, we have
$$
\lim_{z \to -\infty} [\log f_{\varepsilon _T,\lambda_{T,1}}(H_1(z)-x^T\beta_1) - 
\log f_{\varepsilon _T,\lambda_{T,2}}(H_2(z)-x^T\beta_2)]=0.
$$
It now follows that
\begin{equation}
\begin{aligned}
  0 
&=\lim_{z \to - \infty} [\log f_{\varepsilon _T,\lambda_{T,1}}(H_1(z)-x^T\beta_1) - 
\log f_{\varepsilon _T,\lambda_{T,2}}(H_2(z)-x^T\beta_2)] \\
&=\lim_{z \to - \infty} [\log f_{\varepsilon _T,\lambda_{T,1}}(H_1(z)-x^T\beta_1) - 
\log f_{\varepsilon _T,\lambda_{T,1}}(H_2(z)-x^T\beta_2)] \\
 & \qquad \qquad  + \lim_{z \to - \infty} [\log f_{\varepsilon _T,\lambda_{T,1}}(H_2(z)-x^T\beta_2) - 
\log f_{\varepsilon _T,\lambda_{T,2}}(H_2(z)-x^T\beta_2)] \\
& =\lim_{z \to - \infty} [G_1(z,x) +  G_2(z,x)]. 
\end{aligned}
\label{lemma:part}
\end{equation}
Since the limit of $g_{\varepsilon_T, \lambda_{T,j}}(z)$ is finite for $j = 1, 2$ as $z \to -\infty$, we have
$$
1 = \lim_{z \to -\infty} \frac{g_{\varepsilon_T, \lambda_{T,1}}(z)}{g_{\varepsilon_T, \lambda_{T,2}}(z)}
= \frac{\lim_{z \to -\infty} g_{\varepsilon_T, \lambda_{T,1}}(H_1(z) - x^T\beta_1)}
{\lim_{z \to -\infty} g_{\varepsilon_T, \lambda_{T,2}}(H_2(z) - x^T\beta_2)},
$$
which implies that $\lambda_{T,1} = \lambda_{T,2}$, or alternatively, $\lim_{t \to -\infty} \frac{f_{\varepsilon_T, \lambda_{T,1}}(t)}{f_{\varepsilon_T, \lambda_{T,2}}(t)} = 1$ by Assumption (A\ref{A7}).
In both cases, this  indicates that $\lim_{z \to -\infty} G_2(z, x) = 0$. Consequently, we have $\lim_{z \to -\infty} G_1(z, x) = 0$.
By Lagrange mean value theorem,
we have that
\begin{equation}
\begin{aligned}
 &\lim_{z \to - \infty} G_1(z,x)
&=\lim_{z \to - \infty}
[ g_{\varepsilon _T,\lambda_{T,1}}(V^* )(H_1(z) - H_2(z) +
x^T(\beta_2 - \beta_1))], 
\end{aligned}
\label{lemma:eq1}
\end{equation}
where $V^*$ is the value between $\{H_1(z)-x^T\beta_1 \}$ and $ \{ H_2(z)-x^T\beta_2 \}$ and $\lim_{z \to -\infty}g_{\varepsilon _T,\lambda_{T,1}}(z )=g_{-\infty}>0 $.
Since  $\lim_{z \to - \infty} G_1(z,x) =0$,
we obtain that
$$
\lim_{z \to - \infty}
[ H_1(z) - H_2(z) +
x^T(\beta_2 - \beta_1)] = 0. 
$$
Hence we have
$
 x^T(\beta_2 - \beta_1)
=\lim_{z \to - \infty} 
\{ H_1(z) - H_2(z) \}
$
for almost every $x$, which
means that $Var(x^T(\beta_1 - \beta_2)) = 0 $.
 It now follows from Assumption (A\ref{A4}) that $\beta_1 = \beta_2$, and hence we
also have that $\lim_{z\to -\infty}\{H_2(z)-H_1(z)\} = 0$.
Thus the proof is completed.

$\hfill\blacksquare$


\begin{lemma}
    Assume that Conditions (C\ref{C1})-(C\ref{C5}) given above hold true and that Assumptions (A\ref{A1})-(A\ref{A8}) are satisfied. 
    Then $d_{T_n}(\hat{s}_n,s_0)=o_p(1)$.
    \label{lemma:distance}
\end{lemma}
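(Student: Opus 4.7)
The plan is to decompose $d_{T_n}(\hat{s}_n, s_0)$ into the parametric piece $\|\hat{\theta}_n - \theta_0\|$ and the functional piece $\sup_{|t|\le T_n}|\hat{H}_n(t) - H_0(t)|$. Consistency of $\hat{\theta}_n$ and uniform convergence of $\hat{H}_n$ on any \emph{fixed} compact set are already supplied by Theorem \ref{theorem:consistency}, so the task reduces to lifting the second part of that conclusion from a fixed window to the expanding window $[-T_n,T_n]$, using the boundary regularity encoded by (C\ref{C6})--(C\ref{C7}) and the tail bound $P(|Z|\ge T_n)=o(n^{-3/4})$.

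\textbf{Key steps.} First I would fix a large $\tau > 0$, on which Theorem \ref{theorem:consistency} already gives $\sup_{|t|\le\tau}|\hat{H}_n(t)-H_0(t)|\to 0$ a.s. For the outer piece $t\in[\tau,T_n]$ (the interval $[-T_n,-\tau]$ is symmetric), I would use the fixed-point representation \eqref{eq:asym_repre} to write
$$
\hat{H}_n(t)-\hat{H}_n(\tau)=\int_\tau^t \frac{P_n\,dN_+(z)}{P_n\{Y(z)\psi(\tilde{H}_n(Z)-X^{T}\hat{\beta}_n,\,\tilde{H}_n(Z)-W^{T}\hat{\eta}_n)\}},
$$
and compare it term-by-term with the population identity $H_0(t)-H_0(\tau)=\int_\tau^t \{E\,dN_+(z)\}/\{E\,Y(z)\psi(\varepsilon(z))\}$. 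The increment difference then splits into (i) the empirical-versus-population gap in numerator and denominator and (ii) the plug-in gap coming from $\psi$ evaluated at $(\hat{\theta}_n,\hat{H}_n)$ rather than $(\theta_0,H_0)$.

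\textbf{Where the conditions come in.} Condition (C\ref{C6})—finite positive limits of $\phi$ (hence of $\psi$) at $\pm\infty$—ensures that the denominator $E\,Y(z)\psi(\varepsilon(z))$ stays uniformly bounded away from zero on the entire expanding window, preventing the integral representation from degenerating as $T_n\to\infty$. Condition (C\ref{C7}) (boundedness of $\phi''$) lets me Taylor-expand the plug-in gap so that it is dominated by a Lipschitz multiple of $\sup_{|s|\le T_n}|\hat{H}_n(s)-H_0(s)|+\|\hat{\theta}_n-\theta_0\|$; rearranging this yields a Gronwall-type self-bounded inequality for the supremum. For the empirical-versus-population gap, a Glivenko--Cantelli argument applied to the monotone-envelope classes $\{Y(z)\psi(H(z)-x^{T}\beta,\,H(z)-w^{T}\eta):H\in\mathcal{H},\,\theta\in\Theta\}$ and $\{dN_+(z)\}$, combined with the tail trimming $P(|Z|\ge T_n)=o(n^{-3/4})$, gives $o_p(1)$ uniformly over $[-T_n,T_n]$.

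\textbf{Main obstacle.} The delicate point is genuinely uniform control on the \emph{growing} interval: ordinary Glivenko--Cantelli only gives uniformity on fixed compacts, and the empirical denominator could in principle vanish near the boundary. The choice of $T_n$ with $P(|Z|\ge T_n)=o(n^{-3/4})$ is what keeps the effective support, and hence the bracketing complexity, manageable; meanwhile (C\ref{C6})--(C\ref{C7}) prevent $\psi$ from degenerating at the ends of the window, which is exactly what makes the Gronwall closure work.
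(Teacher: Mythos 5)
Your overall architecture matches the paper's: reduce the claim to $\sup_{|t|\le T_n}|\hat{H}_n(t)-H_0(t)|=o_p(1)$ via Theorem \ref{theorem:consistency}, use the integral representation \eqref{eq:asym_repre}, split the discrepancy into a fixed compact $[-T_\tau,T_\tau]$ (handled by consistency) plus a tail region, and invoke (C\ref{C6})--(C\ref{C7}) and the trimming $P(|Z|\ge T_n)=o(n^{-3/4})$ for the boundary. Where you diverge is in the mechanism for the tail region $(T_\tau,T_n]$, and this is where your argument has a genuine gap. You propose to Taylor-expand the plug-in discrepancy $\psi(\hat{\varepsilon})-\psi(\varepsilon)$ into ``a Lipschitz multiple of $\sup_{|s|\le T_n}|\hat{H}_n(s)-H_0(s)|+\|\hat{\theta}_n-\theta_0\|$'' and close with a Gronwall-type self-bounded inequality. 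But the resulting inequality integrates this bound against $dH_0(z)$ over $[T_\tau,T_n]$, and since $H_0$ is unbounded the Gronwall factor is of order $\exp\{C(H_0(T_n)-H_0(T_\tau))\}\to\infty$; a uniform Lipschitz constant does not close the loop on a window whose $H_0$-length diverges. The paper avoids this entirely: Condition (C\ref{C6}) says $\phi$ (hence $\psi$ along the residual path) is monotone and converges to a finite limit $\psi_{\infty}$, so beyond a \emph{fixed} threshold $T_\tau$ one has the oscillation bound $\psi_\infty-\tau\le\psi(\cdot)\le\psi_\infty$ for both the fitted and the true arguments (using monotonicity of $\hat H_n$ and consistency at $T_\tau$), giving $|\psi(\hat{\varepsilon})-\psi(\varepsilon)|\le 2\tau$ pointwise in the tail \emph{independently} of how far $\hat H_n$ drifts from $H_0$ there. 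This turns the final estimate into $|\hat H_n(t)-H_0(t)|\le 3\tau\cdot(\text{bounded})+o_p(1)$ for every $\tau>0$, with no self-referential inequality to close.

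Your route could in principle be repaired by observing that the integrated local Lipschitz constant $\int_{T_\tau}^{T_n}\phi'(H_0(z)-c)\,dH_0(z)\le \phi_{+\infty}-\phi(H_0(T_\tau)-c)\le\tau$ -- but that is exactly the small-oscillation fact the paper uses directly, so the Gronwall machinery is then superfluous, and you would still need to control the location of the Taylor intermediate point (which depends on $\hat H_n$). Two smaller omissions: the relevant $\psi$ in \eqref{eq:alg_psi} mixes the $\log\omega$ contribution (events) with the $\log S$ contribution (administrative censoring), and the paper must first verify via L'H\^opital that \emph{both} pieces share the limits in \eqref{eq:C6_limit} before the oscillation argument applies -- your sketch treats $\psi$ as a single function without checking this; and your population comparison identity $H_0(t)-H_0(\tau)=\int_\tau^t\{E\,dN_+(z)\}/\{E\,Y(z)\psi(\varepsilon(z))\}$ is asserted rather than derived from the martingale structure the paper sets up.
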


\textbf{Proof}:
Given the conditions and assumptions, 
from Theorem \ref{theorem:consistency}, we have
$|\hat{\theta}_n -\theta_0|=o_p(1) $.
Thus, it is sufficient for us to prove that $\sup_{|z|\le T_n} |\hat{H}_n(z) - H_0(z)|=o_p(1)$. Considering the asymptotic representation $\hat{H}_n(t)$ in \eqref{eq:asym_repre}, we have
\begin{equation}
    \hat{H}_n(t)=\int_{0}^{t} 
    \frac{P_n \, dI(Z \le z)}
    {P_n \, I(Z \ge z) \, \psi (\hat{H}_n(Z)-X^T\hat{\beta}_n, \, \hat{H}_n(Z)-W^T\hat{\eta}_n)}.
    \label{eq:lem:pres}
\end{equation}
By L'H\^{o}pital's rule, as $z \to \infty$, we have
$$
\begin{aligned}
    \lim_{t \to \infty}
    \frac{\partial }{\partial H(z)} \log S
    \left( H(z)-X^T\beta ,\, H(z)-W^T\eta \right)
    & = - \lim_{t \to \infty} \frac{\omega\left( H(z)-X^T\beta,\, H(z)-W^T\eta\right) }{S \left( H(z)-X^T\beta ,\, H(z)-W^T\eta \right) } \\
    & =\lim_{t \to \infty}
    \frac{\partial }{\partial H(z)} \log \omega
    \left( H(z)-X^T\beta ,\, H(z)-W^T\eta \right),
\end{aligned}
$$
while when $z \to -\infty$, we have
$$
\lim_{t \to \infty}
\frac{\partial }{\partial H(z)} \log S
\left( H(z)-X^T\beta ,\, H(z)-W^T\eta \right) =0.
$$
Since 
$$
\begin{aligned}
    \frac{d \varpi (z)}{dz} 
&= \frac{d }{dz}[ f_{T }(z)(1 - G_1(z)) + f_{C }(z)(1 - G_2(z))] \\
&= f'_{T }(z)(1 - G_1(z)) 
+ f'_{C }(z)(1 - G_2(z)) - f_{T }(z)G'_1(z) - f_{C }(z)G'_2(z), 
\end{aligned}
$$
it is  straightforward to obtain that
$$
\lim_{z \to \pm \infty}\frac{d \log \varpi (z)}{dz} 
=\lim_{z_1, z_2 \to \pm \infty}[\frac{d \log \omega (z_1, z_2)}{dz_1} 
+\frac{d \log \omega (z_1, z_2)}{dz_2}].
$$
Thus, Conditions (C\ref{C6}) and (C\ref{C7}) indicate that there exists a $T_{\tau} > N$ such that
$$
\psi_{\infty} 
\ge 
\sup_{z > T_{\tau}} \psi(H_0(z) - X^T \beta, H_0(z) - W^T \eta) 
\ge  
\inf_{z > T_{\tau}} \psi(H_0(z) - X^T \beta, H_0(z) - W^T \eta) \ge \psi_{\infty} - \tau.
$$
From Theorem \ref{theorem:consistency}, we have $|\hat{H}_n(t) - H_0(t)| = o_p(1)$ uniformly for $t \in [0, T_{\tau}]$. Then, using the abbreviation in \eqref{eq:note_abb}, for $z \in [0, T_n]$ we have
$$
\begin{aligned}
& |P_n I(Z \ge z) (\psi(\hat{\varepsilon}) - \psi(\varepsilon))| \\
& \leq P_n |I(Z > T_n) \psi(\varepsilon)| + |P_n I(Z \ge T_n) \psi(\hat{\varepsilon})| + |P_n I(Z \ge z, Z < T_{\tau})[\psi(\hat{\varepsilon}) - \psi(\varepsilon)]| \\
& \qquad + P_n I(Z \ge z, T_{\tau} < Z < T_n) | \psi(\hat{\varepsilon}) - \psi(\varepsilon)| \\
& \leq 2\psi_{\infty} P(Z > T_n) + c d_{T_{N}}(\hat{s}_n - s_0) + 2\tau P_n I(Z > z),
\end{aligned}
$$
where $c$ is a constant. Thus, equation \eqref{eq:lem:pres} becomes
$$
\begin{aligned}
\hat{H}_n(t) 
& = \int_{0}^{t} \frac{P_n dI(Z \leq z)}{P_n \{I(Z \geq z) \psi(\hat{\varepsilon})\}} \\
& = \int_{0}^{t} \frac{P_n dI(Z \leq z)}{P_n \{I(Z \geq z) \psi(\varepsilon)\}} - \int_{0}^{t} \frac{P_n dI(Z \geq z)(\psi(\hat{\varepsilon}) - \psi(\varepsilon))}{P_n \{I(Z \geq z) \psi(\varepsilon)\} P_n \{I(Z \geq z) \psi(\hat{\varepsilon})\}} P_n dI(Z \leq z).
\end{aligned}
$$
The function class $\{I(t \geq z): z \geq 0\}$ is the Glivenko–Cantelli class, which implies the first term on the right-hand side of the equation uniformly converges to $H_0(t)$ as $n \to \infty$. 
Consequently, for sufficiently large $n$, we have
$$
|\hat{H}_n(t) - H_0(t)| \leq 3\tau \int_{0}^{t} \frac{P I(Z \geq z)}{P \{I(Z \geq z) \psi(\varepsilon)\}} dH_0(z) + o_p(1).
$$
Thus, the proof is completed.

$\hfill\blacksquare$

\section*{Acknowledgment}
\vspace{-2mm}
 This work wad supported by grants from  NSF of China (Grant Nos.U23A2064 and 12031005).
 \vspace{-5mm}

\bibliographystyle{elsarticle-harv} 

\bibliography{sections/refer}















\end{document}